\documentclass{llncs}
\usepackage{multicol}

\usepackage{amsmath}
\usepackage{amsfonts}
\usepackage{amssymb}
\usepackage{verbatim}

\usepackage{graphicx}
\usepackage{epic}
\usepackage{eepic}
\usepackage{epsfig,float}
\usepackage{pdfsync}

\usepackage{multicol}
\pagestyle{plain}
\DeclareGraphicsRule{.tif}{png}{.png}{`convert #1 `dirname #1`/`basename #1 .tif`.png}

\renewcommand{\le}{\leqslant}
\renewcommand{\ge}{\geqslant}

\newcommand{\ol}{\overline}
\newcommand{\eps}{\varepsilon}
\newcommand{\emp}{\emptyset}

\newcommand{\Sig}{\Sigma}

\newcommand{\noin}{\noindent}

\newcommand{\bi}{\begin{itemize}}
\newcommand{\ei}{\end{itemize}}
\newcommand{\be}{\begin{enumerate}}
\newcommand{\ee}{\end{enumerate}}
\newcommand{\bd}{\begin{description}}
\newcommand{\ed}{\end{description}}
\newcommand{\bq}{\begin{quote}}
\newcommand{\eq}{\end{quote}}

\newcommand{\cA}{{\mathcal A}}

\newcommand{\cD}{{\mathcal D}}

\newcommand{\cN}{{\mathcal N}}

\newcommand{\cT}{{\mathcal T}}

\newcommand{\rev}{\mathbb{R}}
\newcommand{\deter}{\mathbb{D}}

\title{Quotient Complexities of Atoms of Regular Languages\thanks{This work was supported by the Natural Sciences and Engineering Research Council of Canada under grant No.~OGP0000871, by the Estonian Center of Excellence in Computer Science, EXCS, financed by the European Regional Development Fund, and by the Estonian Science Foundation grant 7520.}
}
\author{Janusz~Brzozowski\inst{1} \and Hellis~Tamm\inst{2}}

\authorrunning{Brzozowski, Tamm}   

\institute{David R. Cheriton School of Computer Science, University of Waterloo, \\
Waterloo, ON, Canada N2L 3G1\\
\{{\tt brzozo@uwaterloo.ca}\}
\and
Institute of Cybernetics, Tallinn University of Technology,\\
Akadeemia tee 21, 12618 Tallinn, Estonia\\
\{{\tt hellis@cs.ioc.ee}\} 
}
\begin{document}
\maketitle
\begin{abstract}
An atom  of a regular language $L$ with $n$ (left) quotients is a non-empty 
intersection of uncomplemented or complemented quotients of $L$, where each 
of the $n$ quotients appears in a term of the intersection.
The quotient complexity of $L$, which is the same as the state complexity of $L$, 
is the number of quotients of $L$.
We prove that, for any language $L$ with quotient complexity $n$, 
the quotient complexity of any atom of $L$ with $r$ complemented 
quotients has an upper bound of $2^n-1$ if $r=0$ or $r=n$, and 
$1+\sum_{k=1}^{r} \sum_{h=k+1}^{k+n-r} C_{h}^{n} \cdot C_{k}^{h}$ 
otherwise, where $C_j^i$ is the binomial coefficient. 
For each $n\ge 1$, we exhibit a language whose atoms meet these bounds.

\end{abstract}

\section{Introduction}
\label{sec:intro}

Atoms of regular languages were introduced in 2011 by Brzozowski and Tamm~\cite{BrTa11};
we briefly state their main properties here.

 The \emph{(left) quotient} of a regular language $L$ over an 
alphabet $\Sig$ by a word $w\in\Sig^*$ is the language 
$w^{-1}L=\{x\in\Sig^*\mid wx\in L\}$.
It is well known that a language $L$ is regular if and only if it has a finite number of distinct quotients, and that the number of states in the minimal deterministic finite automaton (DFA) recognizing $L$ is precisely the number of distinct quotients of $L$. 
Also, $L$ is its own quotient by the empty word $\eps$, that is $\eps^{-1}L=L$.
Note too that the quotient by $u\in\Sig^*$ of the quotient by $w\in\Sig^*$ of $L$ is the quotient  by $wu$
of $L$, that is, $u^{-1}(w^{-1}L)=(wu)^{-1}L $.

An \emph{atom}\footnote{The definition in \cite{BrTa11} does not consider 
the intersection of all the complemented quotients to be an atom. 
Our new definition adds symmetry to the theory.}
of a regular language $L$ with quotients $K_0,\ldots, K_{n-1}$ is any 
non-empty language of the form 
$\widetilde{K_0}\cap \cdots \cap \widetilde{K_{n-1}}$, 
where $\widetilde{K_i}$ is either $K_i$ or $\ol{K_i}$, and $\ol{K_i}$ is the complement of $K_i$ with respect to $\Sig^*$. 
Thus atoms of $L$ are regular languages uniquely determined by $L$ and
they define a partition of $\Sig^*$. 
They are pairwise disjoint,  every quotient of $L$ (including $L$ itself)  
is a union of atoms, and  every quotient of an atom is a union of atoms.
Thus the atoms of a regular language are its basic building blocks. 
Also, $\ol{L}$ defines the same atoms as   $L$. 

The \emph{quotient complexity}~\cite{Brz10} of $L$ is the number 
of quotients of $L$, and this is the same number as the number of 
states in the minimal DFA recognizing $L$; 
the latter number is known as the \emph{state complexity}~\cite{Yu01} 
of $L$. 
Quotient complexity allows us to use  language-theoretic 
methods, whereas state complexity is more amenable to automaton-theoretic 
techniques. We use one of these two points of view or the other, 
depending on convenience.

We study the quotient complexity of atoms of regular languages.
Suppose that $L\subseteq\Sig^*$ is a non-empty regular language 
and its set of quotients is 
 $K=\{K_0, K_1, \ldots, K_{n-1}\}$, with $n\ge 1$. 
Our main result is the following:
\begin{theorem}[Main Result]
\label{thm:main}

For $n\ge 1$, the quotient complexity of the atoms with 0 or $n$ complemented quotients 
is less than or equal to   $2^n-1$.
For $n\ge 2$ and $r$ satisfying $1\le r\le n-1$, the quotient complexity of any atom of $L$ with $r$ complemented quotients is less than or equal~to 
\begin{equation*}
f(n,r)=1 + \sum_{k=1}^{r} \sum_{h=k+1}^{k+n-r} C_{h}^{n} \cdot C_{k}^{h}.
\end{equation*}
For $n=1$, the single atom $\Sig^*$ of the language $\Sig^*$ or $\emp$ meets the bound 1.
Moreover, for $n\ge 2$, all the atoms of the language $L_n$ recognized by the DFA  $\cD_n$ of 
Figure~\ref{fig:witness} meet these bounds.
\end{theorem}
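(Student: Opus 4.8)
The plan is to split the statement into an upper bound, valid for an arbitrary regular language $L$ with $n$ quotients $K_0,\dots,K_{n-1}$, and a matching construction.

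\emph{Upper bound.} The engine is the identity $w^{-1}A=\bigl(\bigcap_{i\in S}w^{-1}K_i\bigr)\cap\bigl(\bigcap_{i\notin S}\ol{w^{-1}K_i}\bigr)$, valid for every word $w$ and every atom $A=\widetilde{K_0}\cap\dots\cap\widetilde{K_{n-1}}$, where $S\subseteq\{0,\dots,n-1\}$ is the set of positions occurring uncomplemented in $A$ (so $|S|=n-r$) and we have used $w^{-1}\ol{K_i}=\ol{w^{-1}K_i}$. Since each $w^{-1}K_i$ is again one of $K_0,\dots,K_{n-1}$, every quotient of $A$ has the form $\bigl(\bigcap_{j\in P}K_j\bigr)\cap\bigl(\bigcap_{j\in N}\ol{K_j}\bigr)$, with $P$ and $N$ the images under the transition map of $w$ of $S$ and of its complement. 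Hence $P=\emp$ if $r=n$ and otherwise $1\le|P|\le n-r$; $N=\emp$ if $r=0$ and otherwise $1\le|N|\le r$; and $P\cap N=\emp$ whenever the quotient is non-empty, since $K_j\cap\ol{K_j}=\emp$. Counting the admissible pairs $(P,N)$ then gives the bounds: for $r=0$ or $r=n$ only subsets of a single kind survive, giving at most $\sum_{p=1}^{n}C_p^{n}=2^n-1$; for $1\le r\le n-1$ I would count disjoint pairs by $|N|=k$ and $|P|=h-k$, add $1$ for the possible empty quotient, and use the identity $C_k^{n}\,C_{h-k}^{n-k}=C_h^{n}\,C_k^{h}$ to turn the result into $f(n,r)$.

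\emph{Witness.} For $\cD_n$ (Figure~\ref{fig:witness}) I would take the DFA on states $\{0,\dots,n-1\}$ with initial state $0$, final state $n-1$, and three letters inducing the cycle $(0\;1\;\cdots\;n-1)$, the transposition $(0\;1)$, and the map sending $n-1$ to $0$ and fixing every other state; the first two generate the symmetric group, and since the third has rank $n-1$ the transition monoid of $\cD_n$ is the full transformation monoid $T_n$. One first checks that $\cD_n$ is minimal, so $L_n$ has exactly $n$ quotients, and --- the crucial point --- that all $2^n$ atoms of $L_n$ are non-empty: a word $w$ lies in the atom whose set of uncomplemented positions is $\{i:\delta(i,w)=n-1\}$, every $\varphi\in T_n$ is induced by some word, and every subset of $\{0,\dots,n-1\}$ equals $\varphi^{-1}(n-1)$ for a suitable $\varphi$ (send the desired set to $n-1$ and the rest to $0$, using $n\ge2$). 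Now fix an atom $A$ with uncomplemented set $S$. For every admissible $(P,N)$ pick $\varphi\in T_n$ with $\varphi(S)=P$ and $\varphi(\ol{S})=N$, possible exactly under the size constraints above, and realize $\varphi$ by a word $w$; then $w^{-1}A=\bigl(\bigcap_{j\in P}K_j\bigr)\cap\bigl(\bigcap_{j\in N}\ol{K_j}\bigr)$ is the union of those atoms of $L_n$ whose set $T$ of uncomplemented positions satisfies $P\subseteq T$ and $T\cap N=\emp$. As all atoms are non-empty this union is non-empty and one recovers $P=\bigcap T$, $N=\{0,\dots,n-1\}\setminus\bigcup T$ over the atoms in it, so distinct admissible pairs give distinct quotients; a constant $\varphi$ makes $P\cap N\ne\emp$ and produces the empty quotient whenever $1\le r\le n-1$. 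The number of quotients of $A$ therefore equals the upper bound and depends only on $|S|=n-r$, so every atom with $r$ complemented quotients meets $f(n,r)$ (respectively $2^n-1$ when $r\in\{0,n\}$); the case $n=1$ is immediate, as $\Sig^*$ is its own unique quotient.

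\emph{Main obstacle.} The work lies in the witness: pinning down $\cD_n$ so that its transition monoid is genuinely $T_n$ and all $2^n$ atoms are non-empty, and ruling out any collapse among the $\sum C_h^{n}\,C_k^{h}$ candidate intersections. The ``union of atoms'' description reduces distinctness to the clean observation that $\varphi\mapsto\varphi^{-1}(n-1)$ is onto the subsets, but getting the boundary cases $r\in\{0,n\}$ and small $n$ exactly right, and determining precisely when $\emp$ is a quotient, is where care is needed; the binomial identity and the upper-bound count are routine by comparison.
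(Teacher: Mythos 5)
Your proposal is correct, and while the upper-bound half coincides with the paper's Propositions~\ref{prop:no_comp} and~\ref{prop:upperbound} (classify $w^{-1}A$ by the pair $(P,N)$ of uncomplemented/complemented index sets and count admissible disjoint pairs), your tightness argument takes a genuinely different route. The paper builds the \'atomaton, identifies it with $\cD_n^{\rev\deter\rev}$ (Proposition~\ref{prop:isomorphism}), invokes Theorem~\ref{thm:Brz} to get the minimal DFA of each atom by determinizing, and then counts reachable states, which it organizes as super-algebras $\langle V\rangle_U$ reached via the strong-connectedness and reachability lemmas. You instead never construct an automaton for the atom: you realize every admissible $(P,N)$ directly by a word inducing a transformation $\varphi$ with $\varphi(S)=P$, $\varphi(\ol{S})=N$ (available because the transition monoid is all of $T_n$ by Theorem~\ref{thm:piccard1}), and you get distinctness for free from the facts that all $2^n$ atoms of $L_n$ are non-empty (your $\varphi^{-1}(n-1)$ argument, which replaces the paper's detour through Theorem~\ref{thm:SWY} and Corollary~\ref{cor:atoms}) and that atoms partition $\Sig^*$, so the interval $\{T: P\subseteq T\subseteq Q\setminus N\}$ of atoms appearing in $w^{-1}A$ recovers $(P,N)$. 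In fact that interval is exactly the paper's super-algebra $\langle P\rangle_{Q\setminus N}$, so the two proofs count the same objects, but yours derives both reachability and pairwise distinctness in one stroke, whereas the paper's distinctness comes from minimality of $\cA_P^\deter$ via Theorem~\ref{thm:Brz}. What the paper's heavier machinery buys is an explicit description of the minimal DFAs of the atoms (useful for the worked example and for the \'atomaton theory); what yours buys is brevity and independence from the determinization theorem. The only imprecision worth fixing is the remark that $c$ generates $T_n$ together with $a,b$ ``since it has rank $n-1$'': rank alone does not suffice in general, and you should instead cite the Piccard--D\'enes result (Theorem~\ref{thm:piccard1}) for a cycle, a transposition, and a singular transformation.
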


\begin{figure}[hbt]
\begin{center}
\setlength{\unitlength}{0.00043745in}
\begingroup\makeatletter\ifx\SetFigFont\undefined%
\gdef\SetFigFont#1#2#3#4#5{%
  \reset@font\fontsize{#1}{#2pt}%
  \fontfamily{#3}\fontseries{#4}\fontshape{#5}%
  \selectfont}%
\fi\endgroup%
{\renewcommand{\dashlinestretch}{30}
\begin{picture}(7318,1617)(0,-10)
\put(15,659){\makebox(0,0)[lb]{\smash{{\SetFigFont{9}{10.8}{\familydefault}{\mddefault}{\updefault}$\cD_n$}}}}
\put(3388.000,1151.333){\arc{333.333}{2.2143}{7.2105}}
\blacken\path(3525.638,1114.417)(3488.000,1018.000)(3566.107,1085.913)(3525.638,1114.417)
\put(2308.000,1151.333){\arc{333.333}{2.2143}{7.2105}}
\blacken\path(2445.638,1114.417)(2408.000,1018.000)(2486.107,1085.913)(2445.638,1114.417)
\put(5820.000,1158.333){\arc{333.333}{2.2143}{7.2105}}
\blacken\path(5957.638,1121.417)(5920.000,1025.000)(5998.107,1092.913)(5957.638,1121.417)
\put(6953.000,1211.333){\arc{333.333}{2.2143}{7.2105}}
\blacken\path(7090.638,1174.417)(7053.000,1078.000)(7131.107,1145.913)(7090.638,1174.417)
\put(1232,703){\ellipse{630}{630}}
\put(3396,703){\ellipse{630}{630}}
\put(2326,701){\ellipse{630}{630}}
\put(5828,703){\ellipse{630}{630}}
\put(6951,702){\ellipse{630}{630}}
\put(6950,696){\ellipse{720}{720}}
\path(653,703)(923,703)
\blacken\path(803.000,673.000)(923.000,703.000)(803.000,733.000)(803.000,673.000)
\path(2633,703)(3083,703)
\blacken\path(2963.000,673.000)(3083.000,703.000)(2963.000,733.000)(2963.000,673.000)
\path(3713,703)(4163,703)
\blacken\path(4043.000,673.000)(4163.000,703.000)(4043.000,733.000)(4043.000,673.000)
\path(1553,703)(2003,703)
\blacken\path(1883.000,673.000)(2003.000,703.000)(1883.000,733.000)(1883.000,673.000)
\path(5063,711)(5513,711)
\blacken\path(5393.000,681.000)(5513.000,711.000)(5393.000,741.000)(5393.000,681.000)
\path(6136,703)(6586,703)
\blacken\path(6466.000,673.000)(6586.000,703.000)(6466.000,733.000)(6466.000,673.000)
\path(2138,965)(2137,967)(2134,970)
	(2129,977)(2121,986)(2110,998)
	(2097,1014)(2082,1031)(2065,1049)
	(2046,1068)(2026,1088)(2005,1106)
	(1982,1125)(1959,1141)(1933,1157)
	(1906,1170)(1877,1182)(1846,1191)
	(1813,1196)(1778,1198)(1744,1195)
	(1711,1188)(1680,1178)(1652,1165)
	(1627,1150)(1603,1134)(1581,1116)
	(1560,1097)(1541,1077)(1522,1057)
	(1505,1037)(1490,1018)(1476,1000)
	(1465,985)(1455,972)(1440,950)
\blacken\path(1482.814,1066.047)(1440.000,950.000)(1532.387,1032.247)(1482.814,1066.047)
\path(6638,493)(6637,493)(6636,492)
	(6634,491)(6630,489)(6624,487)
	(6617,483)(6607,479)(6595,474)
	(6581,468)(6564,460)(6544,452)
	(6522,442)(6497,432)(6470,420)
	(6440,408)(6407,395)(6372,381)
	(6336,367)(6297,352)(6256,337)
	(6213,321)(6168,305)(6122,289)
	(6074,273)(6024,257)(5973,241)
	(5921,226)(5866,210)(5810,195)
	(5752,180)(5692,165)(5630,151)
	(5566,137)(5500,123)(5430,110)
	(5358,98)(5283,86)(5205,75)
	(5123,64)(5038,54)(4949,45)
	(4856,37)(4759,30)(4660,24)
	(4556,19)(4451,15)(4343,13)
	(4246,12)(4150,13)(4053,15)
	(3958,17)(3864,21)(3772,25)
	(3682,30)(3593,37)(3507,43)
	(3423,51)(3340,59)(3260,67)
	(3181,76)(3104,86)(3028,96)
	(2954,106)(2881,117)(2810,128)
	(2740,139)(2671,151)(2603,163)
	(2536,175)(2470,188)(2405,200)
	(2342,213)(2279,226)(2218,239)
	(2158,252)(2100,265)(2043,277)
	(1988,290)(1935,302)(1883,314)
	(1835,326)(1788,337)(1744,348)
	(1703,358)(1664,367)(1629,376)
	(1597,384)(1567,391)(1541,398)
	(1518,404)(1499,409)(1482,413)
	(1468,417)(1457,420)(1448,422)
	(1442,424)(1433,426)
\blacken\path(1556.650,429.254)(1433.000,426.000)(1543.635,370.683)(1556.650,429.254)
\put(1177,636){\makebox(0,0)[lb]{\smash{{\SetFigFont{7}{8.4}{\rmdefault}{\mddefault}{\updefault}$0$}}}}
\put(2265,636){\makebox(0,0)[lb]{\smash{{\SetFigFont{7}{8.4}{\rmdefault}{\mddefault}{\updefault}$1$}}}}
\put(3352,636){\makebox(0,0)[lb]{\smash{{\SetFigFont{7}{8.4}{\rmdefault}{\mddefault}{\updefault}$2$}}}}
\put(2761,801){\makebox(0,0)[lb]{\smash{{\SetFigFont{7}{8.4}{\familydefault}{\mddefault}{\updefault}$a$}}}}
\put(3833,823){\makebox(0,0)[lb]{\smash{{\SetFigFont{7}{8.4}{\familydefault}{\mddefault}{\updefault}$a$}}}}
\put(6263,823){\makebox(0,0)[lb]{\smash{{\SetFigFont{7}{8.4}{\familydefault}{\mddefault}{\updefault}$a$}}}}
\put(1598,838){\makebox(0,0)[lb]{\smash{{\SetFigFont{7}{8.4}{\familydefault}{\mddefault}{\updefault}$a,b$}}}}
\put(5558,658){\makebox(0,0)[lb]{\smash{{\SetFigFont{6}{7.2}{\familydefault}{\mddefault}{\updefault}$n-2$}}}}
\put(4516,644){\makebox(0,0)[lb]{\smash{{\SetFigFont{7}{8.4}{\familydefault}{\mddefault}{\updefault}$\cdots$}}}}
\put(5191,816){\makebox(0,0)[lb]{\smash{{\SetFigFont{7}{8.4}{\familydefault}{\mddefault}{\updefault}$a$}}}}
\put(6825,1431){\makebox(0,0)[lb]{\smash{{\SetFigFont{7}{8.4}{\familydefault}{\mddefault}{\updefault}$b$}}}}
\put(4005,133){\makebox(0,0)[lb]{\smash{{\SetFigFont{7}{8.4}{\familydefault}{\mddefault}{\updefault}$a,c$}}}}
\put(6681,640){\makebox(0,0)[lb]{\smash{{\SetFigFont{6}{7.2}{\familydefault}{\mddefault}{\updefault}$n-1$}}}}
\put(5603,1408){\makebox(0,0)[lb]{\smash{{\SetFigFont{7}{8.4}{\familydefault}{\mddefault}{\updefault}$b,c$}}}}
\put(1680,1273){\makebox(0,0)[lb]{\smash{{\SetFigFont{7}{8.4}{\familydefault}{\mddefault}{\updefault}$b$}}}}
\put(3188,1423){\makebox(0,0)[lb]{\smash{{\SetFigFont{7}{8.4}{\familydefault}{\mddefault}{\updefault}$b,c$}}}}
\put(2198,1423){\makebox(0,0)[lb]{\smash{{\SetFigFont{7}{8.4}{\familydefault}{\mddefault}{\updefault}$c$}}}}
\put(1110,1401){\makebox(0,0)[lb]{\smash{{\SetFigFont{7}{8.4}{\familydefault}{\mddefault}{\updefault}$c$}}}}
\put(1233.000,1136.333){\arc{333.333}{2.2143}{7.2105}}
\blacken\path(1370.638,1099.417)(1333.000,1003.000)(1411.107,1070.913)(1370.638,1099.417)
\end{picture}
}
\end{center}
\caption{DFA $\cD_n$ of language $L_n$ whose atoms meet the bounds.} 
\label{fig:witness}
\end{figure}

In Section~\ref{sec:bounds} we derive upper bounds on the quotient complexities of atoms.
In Section~\ref{sec:aut} we define our notation and terminology for automata,
and
 present the definition of the \'atomaton~\cite{BrTa11} of a regular language; this is a nondeterministic finite automaton (NFA) whose states are the atoms 
of the language. 
We also provide a different characterization of the \'atomaton.
We introduce a class of DFA's in Section~\ref{sec:witness} and study the \'atomata of 
their languages. We then prove  in Section~\ref{sec:complexity} that the atoms 
of these languages meet the quotient complexity bounds. 
Section~\ref{sec:conclusions} concludes the paper.

\section{Upper Bounds on the Quotient Complexities of Atoms}
\label{sec:bounds}

We first derive upper bounds on the quotient complexity of atoms. 
We use  quotients here, since they are convenient 
for this task. First we deal with the two atoms that have only 
uncomplemented or only complemented quotients.

\begin{proposition}
[Atoms with 0 or $n$ Complemented Quotients]
\label{prop:no_comp}\\
For $n\ge 1$, the quotient complexity of the two atoms $A_K=K_0\cap \cdots\cap K_{n-1}$ 
and $A_\emp = \ol{K_0}\cap \cdots\cap \ol{K_{n-1}}$ is less than or equal to $2^n-1$.
\end{proposition}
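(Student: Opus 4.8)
The plan is to rely on two elementary facts about quotients already available: quotients distribute over Boolean operations, so that $w^{-1}(K\cap K')=w^{-1}K\cap w^{-1}K'$ and $w^{-1}\ol{K}=\ol{w^{-1}K}$, and a quotient of a quotient of $L$ is again a quotient of $L$, since $u^{-1}(w^{-1}L)=(wu)^{-1}L$. In particular, if $K_i=v^{-1}L$ then $w^{-1}K_i=(vw)^{-1}L$ is a quotient of $L$, hence $w^{-1}K_i\in\{K_0,\ldots,K_{n-1}\}$ for every word $w$ and every index $i$.

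First I would fix a word $w$ and compute $w^{-1}A_K=\bigcap_{i=0}^{n-1}w^{-1}K_i$. By the above, each $w^{-1}K_i$ is one of $K_0,\ldots,K_{n-1}$, so if $S_w\subseteq\{0,\ldots,n-1\}$ denotes the set of indices $j$ for which $K_j=w^{-1}K_i$ for some $i$, then $w^{-1}A_K=\bigcap_{j\in S_w}K_j$. The set $S_w$ is nonempty, because the intersection over $i$ ranges over the nonempty index set $\{0,\ldots,n-1\}$ (here $n\ge 1$). Thus the quotient $w^{-1}A_K$ is determined by the nonempty subset $S_w$, and since an $n$-element set has exactly $2^n-1$ nonempty subsets, $A_K$ has at most $2^n-1$ distinct quotients.

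For $A_\emp$ I would run the identical argument with $K_i$ replaced by $\ol{K_i}$, using $w^{-1}\ol{K_i}=\ol{w^{-1}K_i}$: for the same nonempty index set $S_w$ we get $w^{-1}A_\emp=\bigcap_{j\in S_w}\ol{K_j}=\ol{\bigcup_{j\in S_w}K_j}$, so again there are at most $2^n-1$ distinct quotients. The case $n=1$ I would dispose of separately as a sanity check: then $A_K=L=\Sig^*$ has its single quotient, meeting the bound $2^1-1=1$, while $A_\emp=\emp$ (which is not a genuine atom) also has quotient complexity $1$.

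I do not expect a real obstacle here; the proof is a counting argument. The two points that need care are (i) observing that $S_w$ is always nonempty, which is exactly what reduces the count from $2^n$ to $2^n-1$, and (ii) being content with an inequality, since distinct index sets $S$ may define the same language $\bigcap_{j\in S}K_j$. Showing that the value $2^n-1$ is actually reached is a separate matter, handled later by the witness DFA $\cD_n$ of Figure~\ref{fig:witness}.
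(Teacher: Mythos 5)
Your proof is correct and follows essentially the same route as the paper: expand $w^{-1}A_K$ (resp. $w^{-1}A_\emp$) as an intersection of quotients (resp. complemented quotients) of $L$ indexed by a non-empty subset of $\{0,\ldots,n-1\}$, and count the $2^n-1$ non-empty subsets. The extra bookkeeping with $S_w$ and the $n=1$ sanity check are harmless additions to the paper's argument.
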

\begin{proof}
Every quotient $w^{-1}A_K$ of atom $A_K$ is the intersection of languages 
$w^{-1}K_i$, which  are quotients of $L$:
\begin{eqnarray*}            
w^{-1}A_K &=& w^{-1}(K_0\cap\cdots\cap  {K_{n-1}})
=w^{-1}K_0\cap\cdots\cap {w^{-1}K_{n-1}}. 
\end{eqnarray*}
Since these quotients of $L$ need not be distinct, 
$w^{-1}A_K$ may be the intersection of any non-empty subset of 
quotients of $L$. Hence $A_K$ can have at most $2^n-1$ quotients.

The argument for the atom $A_\emp = \ol{K_0}\cap \cdots\cap \ol{K_{n-1}}$ 
with $n$ complemented quotients is similar, 
since $w^{-1}\ol{K_i}=\ol{w^{-1}K_i}$.
\qed
\end{proof}

Next, we present an upper bound on the quotient complexity of any atom 
with at least one and fewer than $n$ complemented quotients.

\begin{proposition}
[Atoms with $r$ Complemented Quotients, $1\le r\le n-1$]
\label{prop:upperbound}
For $n\ge 2$ and $1\le r\le n-1$, the quotient complexity of any atom 
with $r$ complemented quotients is less than or equal to 
\begin{equation}
f(n,r)=1 + \sum_{k=1}^{r} \sum_{h=k+1}^{k+n-r} C_{h}^{n} \cdot C_{k}^{h},
\end{equation}
where $C_j^i$ is the binomial coefficient 
``$i$ choose $j$".
\end{proposition}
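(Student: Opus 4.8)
The plan is to mimic the structure of the proof of Proposition~\ref{prop:no_comp}, but to keep careful track of which quotients of $L$ appear uncomplemented and which appear complemented in a given quotient of the atom. Fix an atom $A = \widetilde{K_0}\cap\cdots\cap\widetilde{K_{n-1}}$ with exactly $r$ complemented terms, say the uncomplemented block is $K_i$ for $i\in P$ and the complemented block is $\ol{K_j}$ for $j\in N$, where $|P|=n-r$ and $|N|=r$. For any word $w$ we have
\begin{equation*}
w^{-1}A = \Bigl(\bigcap_{i\in P} w^{-1}K_i\Bigr)\cap\Bigl(\bigcap_{j\in N} \ol{w^{-1}K_j}\Bigr),
\end{equation*}
using $w^{-1}\ol{K_j}=\ol{w^{-1}K_j}$. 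So every nonempty quotient of $A$ is an intersection of some nonempty multiset of quotients of $L$ taken positively together with some (possibly empty) multiset of quotients of $L$ taken negatively. The key combinatorial observation is that such an intersection, after removing duplicates, is determined by a pair of \emph{sets} of quotients of $L$: a nonempty set $S^+$ of quotients appearing positively and a set $S^-$ of quotients appearing negatively, and moreover we may assume $S^+$ and $S^-$ are disjoint (if the same quotient $K$ appears both positively and negatively the intersection is empty, hence not a quotient of $A$) and that the intersection $\bigcap_{K\in S^+}K \cap \bigcap_{K\in S^-}\ol{K}$ is itself a legitimate nonempty "generalized atom" over the quotients of $L$.

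Next I would bound how large $|S^+|$ and $|S^-|$ can be. Since $w^{-1}A$ is an intersection of $n-r$ positive quotients (the images of the $K_i$, $i\in P$) and $r$ negative quotients (the images of the $\ol{K_j}$, $j\in N$), and the positive part is nonempty hence $|S^+|\ge 1$, we get $1\le |S^+|\le n-r$ coming purely from collapsing duplicates among the $w^{-1}K_i$, $i\in P$ — wait, that is not quite right, because different $i\in P$ could map to the same quotient while some $j\in N$ maps to yet another; the correct statement is $|S^+\cup S^-|\le n$ (these $n$ intersected languages are images of the $n$ quotients $K_0,\dots,K_{n-1}$ under $w^{-1}$, so there are at most $n$ distinct ones), $1\le |S^+|$, and $|S^-|\le r$ because only the $r$ complemented terms can contribute negatively. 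Writing $k=|S^-|$ and $h=|S^+\cup S^-|$, we have $1\le k\le r$ (the case $k=0$ needs separate, slightly different treatment and contributes comparatively little — indeed when $k=0$ the quotient of $A$ is just a positive intersection of at most $n-r$ quotients, but I have to be careful not to double count) and $k+1\le h\le k+(n-r)$ since $|S^+|=|S^+\cup S^-|-|S^-\cap S^+|\ge h-k$ wait $S^+,S^-$ disjoint so $|S^+|=h-k$ and $1\le h-k\le n-r$ gives $k+1\le h\le k+n-r$. For each choice of $h$ we choose the $h$-element set $S^+\cup S^-$ of quotients of $L$ in $C_h^n$ ways, and within it choose the $k$-element negative subset $S^-$ in $C_k^h$ ways. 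Summing gives $\sum_{k=1}^{r}\sum_{h=k+1}^{k+n-r}C_h^n\cdot C_k^h$, and adding $1$ for the $k=0$ contributions — which, one checks, all collapse to at most the single extra possibility not already counted, namely... here I need to argue that all quotients with $S^-=\emptyset$ together contribute at most $1$ new language, or alternatively fold the $k=0$ count in more cleverly — yields $f(n,r)$.

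I would carry out the steps in this order: (1) write $w^{-1}A$ as the disjoint-type intersection above; (2) reduce an arbitrary such intersection to a canonical $(S^+,S^-)$ with $S^+\cap S^-=\emptyset$, $S^+\neq\emptyset$, discarding empty intersections; (3) establish the range constraints $|S^-|\le r$, $|S^+\cup S^-|\le n$, $|S^+|\le n-r$; (4) count the pairs $(S^+,S^-)$ satisfying these constraints, treating $|S^-|\ge 1$ by the double sum and $|S^-|=0$ by the additive $1$; (5) conclude $\kappa(A)\le f(n,r)$. The main obstacle I anticipate is step (4), specifically justifying the lone additive $1$: I must show that the "all positive" quotients of $A$ (those with $S^-=\emptyset$) do not contribute $\sum_{h=1}^{n-r}C_h^n$ genuinely new languages but in fact collapse — presumably because a quotient of $A$ with empty negative part must already equal one with nonempty negative part, or because only one such language (perhaps $A$'s own "positive core", or the full space, or $\emp$ excluded) can actually arise. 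This reconciliation of the crude subset count with the sharp value $f(n,r)$ is the delicate combinatorial heart of the argument; once it is settled, everything else is bookkeeping with binomial coefficients.
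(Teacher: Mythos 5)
Your decomposition and counting scheme are the same as the paper's (collapse the $n$ intersected factors of $w^{-1}A$ to a disjoint pair of sets, one positive and one negative, then count pairs by $h=|S^+\cup S^-|$ and $k=|S^-|$), but the proposal stalls exactly where you flag the ``delicate combinatorial heart,'' and the difficulty you anticipate there is a phantom. Since the atom $A$ has $r\ge 1$ complemented terms, the index set $N$ is non-empty, so every quotient $w^{-1}A$ contains at least one complemented factor $\ol{w^{-1}K_j}$ with $j\in N$; hence the set $S^-$ of quotients of $L$ appearing negatively always satisfies $|S^-|\ge 1$. The case $k=|S^-|=0$ that you propose to ``treat separately'' and somehow fold into the additive $1$ simply never occurs, and there is nothing to reconcile: the double sum $\sum_{k=1}^{r}\sum_{h=k+1}^{k+n-r}C_h^n\cdot C_k^h$ already counts all candidate non-empty quotients of $A$.

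What the additive $1$ actually pays for is the empty language. When $S^+\cap S^-\neq\emp$ the intersection $w^{-1}A$ equals $\emp$, which is itself a single possible quotient of $A$ and must be counted as one state of the minimal DFA; your step (2) discards these empty intersections instead of charging one state for all of them collectively. With those two corrections --- drop the non-existent $k=0$ case because $|S^-|\ge 1$ is automatic, and add $1$ for the empty quotient --- your argument becomes precisely the paper's proof. As written, however, step (4) cannot be completed in the form you describe, because the claim you say you must establish (that the $S^-=\emp$ quotients ``collapse to at most one new language'') is not the statement that is needed and is not what the $+1$ represents.
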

\begin{proof}
Consider an intersection of complemented and uncomplemented quotients 
that constitutes an atom.
Without loss of generality, we arrange the terms in the intersection 
in such a way that all complemented quotients appear on the right. 
Thus let 
$A_i = K_0\cap\cdots\cap K_{n-r-1}\cap\ol{K_{n-r}}\cap\cdots\cap \ol{K_{n-1}}$
be an atom of $L$ with $r$ complemented quotients of $L$, where 
$1\le r\le n-1$.
The quotient of $A_i$ by any word $w\in\Sigma^*$ is 
\begin{eqnarray*}            
w^{-1}A_i  &=& w^{-1}(K_0\cap\cdots\cap K_{n-r-1}\cap\ol{K_{n-r}}\cap\cdots\cap 
\ol{K_{n-1}})\\
&=& w^{-1}K_0\cap\cdots\cap w^{-1}K_{n-r-1}\cap \ol{w^{-1}K_{n-r}}\cap\cdots\cap 
\ol{w^{-1}K_{n-1}}. 
\end{eqnarray*}
Since each quotient $w^{-1}K_j$ is a quotient, say $K_{i_j}$, of $L$, we have
\begin{eqnarray*}
w^{-1}A_i  &=& K_{i_0}\cap\cdots\cap K_{i_{n-r-1}}\cap\ol{K_{i_{n-r}}}\cap\cdots\cap 
\ol{K_{i_{n-1}}}.
\end{eqnarray*}

The cardinality of a set $S$ is denoted by $|S|$.
Let the set of distinct quotients of $L$ appearing in $w^{-1}A_i$ 
uncomplemented (respectively, complemented) be $X$ (respectively, $Y$), where
$1\le |X|\le n-r$ and $1 \le |Y|\le r$.
If $X\cap Y\neq \emp$, then $w^{-1}A_i=\emp$. 
Therefore assume that $X\cap Y = \emp$, and that $|X\cup Y|=h$, 
where $2\le h\le n$; there are $C_h^n$ such sets $X\cup Y$.
Suppose further that $|Y|=k$, where $1\le k\le r$.
There are $C_k^h$ ways of choosing $Y$. Hence there are at most
$\sum_{h=k+1}^{k+n-r} C_{h}^{n} \cdot C_{k}^{h}$
distinct intersections with $k$ complemented quotients.
Thus, the total number of intersections of uncomplemented 
and complemented quotients can be at most
$\sum_{k=1}^{r} \sum_{h=k+1}^{k+n-r} C_{h}^{n} \cdot C_{k}^{h}$.

Adding 1 for the empty quotient of $w^{-1}A_i$, we get the required bound.
\qed
\end{proof}

We now consider the properties of the function $f(n,r)$. 

\begin{proposition}[Properties of Bounds]
For any $n\ge 2$ and $1\le r\le n-1$, 
\be
\item
$f(n,r)=f(n,n-r)$.
\item
For a fixed $n$, the maximal value of $f(n,r)$ occurs when $r=\lfloor n/2 \rfloor$.
\ee
\end{proposition}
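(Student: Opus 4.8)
The plan is to prove the two properties of $f(n,r)$ by manipulating the double sum directly, using standard binomial-coefficient identities. For Property 1, the symmetry $f(n,r)=f(n,n-r)$, I would rewrite the inner product $C_h^n\cdot C_k^h$ using the classical subset-of-a-subset identity $C_h^n\cdot C_k^h = C_k^n\cdot C_{h-k}^{n-k}$ (choose the $k$-element set $Y$ first, then the $(h-k)$-element set $X$ from the remaining $n-k$ elements). After this rewriting, the quantity $\sum_{k=1}^r\sum_{h=k+1}^{k+n-r} C_h^n\cdot C_k^h$ counts, combinatorially, the number of ways to pick an ordered pair $(Y,X)$ of disjoint non-empty subsets of an $n$-set with $1\le|Y|\le r$ and $1\le|X|\le n-r$. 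Swapping the roles of the bounds $r$ and $n-r$ — i.e. requiring $1\le|Y|\le n-r$ and $1\le|X|\le r$ — and then relabelling $(Y,X)\mapsto(X,Y)$ gives exactly the count for $f(n,n-r)$, so the two sums agree. This is clean: the $+1$ for the empty quotient is common to both sides.

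For Property 2, I would first fix $n$ and show that $f(n,r)$ is monotonically non-decreasing as $r$ runs from $1$ to $\lfloor n/2\rfloor$; combined with Property 1 this forces the maximum at $r=\lfloor n/2\rfloor$. Using the combinatorial reading above, $f(n,r)-1$ counts ordered pairs $(X,Y)$ of disjoint non-empty subsets of an $n$-set with $|X|\le n-r$ and $|Y|\le r$. Passing from $r$ to $r+1$ (with $r+1\le \lfloor n/2\rfloor$, so $r+1\le n-r-1< n-r$) relaxes the constraint on $|Y|$ from $\le r$ to $\le r+1$ while tightening the constraint on $|X|$ from $\le n-r$ to $\le n-r-1$. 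So $f(n,r+1)-f(n,r)$ equals the number of pairs with $|Y|=r+1$ (and $|X|\le n-r-1$) minus the number of pairs with $|X|=n-r$ (and $|Y|\le r$). The plan is to show the first count dominates the second by exhibiting an injection from the latter family of pairs into the former: given a pair $(X,Y)$ with $|X|=n-r$ and $|Y|\le r\le n-r-1$, we have $|Y|<|X|$, so we can move elements of $X$ into $Y$ to build a pair $(X',Y')$ with $|Y'|=r+1$ and $|X'|=n-r-1-(r+1-|Y|)\le n-r-1$ (here I would fix a linear order on the universe and always transfer the smallest available elements of $X$, making the map well-defined and injective since $(X\cup Y, X, Y)$ can be recovered). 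This establishes the desired inequality.

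The main obstacle is getting the injection in Property 2 exactly right at the boundary: when $|Y|$ is already close to $r$, only a few elements need to be transferred, and one must check that $|X'|$ stays within its allowed range $\le n-r-1$ and that $X',Y'$ remain non-empty and disjoint — this is where the hypothesis $r+1\le\lfloor n/2\rfloor$, equivalently $r\le n-r-1$, is essential, as it guarantees $|X|=n-r$ is strictly larger than any allowed $|Y|$ and leaves room for the transfer. An alternative, possibly cleaner route for Property 2 that I would fall back on if the injection gets fiddly is a purely algebraic one: write $f(n,r)-1$ in a closed or semi-closed form (the double sum telescopes after applying the Vandermonde-type identity $\sum_h C_h^n C_k^h x^{h}$-style manipulations, or after swapping the order of summation to sum over $h$ first), differentiate the resulting expression as a function of a real parameter, or simply compare consecutive integer values term by term. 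Either way, Property 1 does the heavy lifting by reducing "find the max" to "show monotonicity on the first half."
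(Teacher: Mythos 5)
Your combinatorial reading of the double sum is correct and handles Property 1 cleanly: $\sum_{k=1}^{r}\sum_{h=k+1}^{k+n-r} C_h^n\cdot C_k^h$ does count ordered pairs $(Y,X)$ of disjoint non-empty subsets of $Q$ with $|Y|\le r$ and $|X|\le n-r$, and exchanging the roles of $X$ and $Y$ gives the symmetry. The paper obtains the same fact by re-indexing the sum ($l=h-k$, swapping the order of summation, using $C_k^{k+l}=C_l^{k+l}$), so your route for part 1 is a legitimate and arguably more transparent alternative. Your reduction of Property 2 to showing that the number of pairs with $|Y|=r+1$, $|X|\le n-r-1$ dominates the number of pairs with $|Y|\le r$, $|X|=n-r$ also matches the paper's computation of $f(n,r+1)-f(n,r)$ exactly.

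However, the injection you propose to finish Property 2 is not injective, so there is a genuine gap. Take $n=5$, $r=1$ (so $r<\lfloor n/2\rfloor=2$). The pairs $(\{1,2,3,4\},\{0\})$ and $(\{0,2,3,4\},\{1\})$ both have $|X|=n-r=4$ and $|Y|=1\le r$; transferring the smallest element of $X$ into $Y$ sends both to $(\{2,3,4\},\{0,1\})$. Your recovery argument breaks down because, although $X\cup Y$ and $|Y|$ are recoverable from the image, the rule ``the moved elements are the smallest elements of $X$'' does not determine which elements of $Y'$ came from $X$: in the example both $\{0\}$ and $\{1\}$ are consistent choices of moved element. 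No element-transfer rule of this kind can succeed for free, since restricted to a fixed union $U$ with $|U|=h=k+n-r$ the map must inject the $k$-subsets of $U$ into the $(r+1)$-subsets of $U$, and the existence of such an injection is equivalent to the inequality $C_k^h\le C_{r+1}^h$ --- which is precisely the non-trivial content of the step. The paper proves exactly this inequality by writing $C_{r+1}^{k+n-r}/C_k^{k+n-r}$ as a product of $r-k+1$ factors, each exceeding $1$ because $n>2r+1$; alternatively one can invoke unimodality of the sequence $(C_i^h)_i$ together with the observation that $k<\min(r+1,\,h-r-1)$. You need one of these arguments to close the gap; the algebraic fallback you sketch (``telescoping'', ``differentiate'') is too vague to substitute for it.
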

\begin{proof}
Since $f(n,r)=1+ \sum_{k=1}^{r} \sum_{h=k+1}^{k+n-r} C_{h}^{n} \cdot C_{k}^{h}$, 
and the following equations hold: 
\begin{eqnarray*}    
\sum_{k=1}^{r} \sum_{h=k+1}^{k+n-r} C_{h}^{n} \cdot C_{k}^{h}=
\sum_{k=1}^{r} \sum_{l=1}^{n-r} C_{k+l}^{n} \cdot C_{k}^{k+l}=
\sum_{l=1}^{n-r} \sum_{k=1}^{r} C_{k+l}^{n} \cdot C_{k}^{k+l}\\
=\sum_{l=1}^{n-r} \sum_{k=1}^{r} C_{k+l}^{n} \cdot C_{l}^{k+l}=
\sum_{l=1}^{n-r} \sum_{m=l+1}^{l+r} C_{m}^{n} \cdot C_{l}^{m},        
\end{eqnarray*}
we have $f(n,r)=f(n,n-r)$.

For the second part,
we will assume that $1\le r <\lfloor n/2 \rfloor$,
and show that $f(n,r+1)>f(n,r)$ for this case.  
We find $f(n,r+1)-f(n,r)$ as follows:
\begin{eqnarray*}    
f(n,r+1)-f(n,r)&=&1+\sum_{k=1}^{r+1} \sum_{h=k+1}^{k+n-r-1} C_{h}^{n} \cdot C_{k}^{h}-
(1+\sum_{k=1}^{r} \sum_{h=k+1}^{k+n-r} C_{h}^{n} \cdot C_{k}^{h})\\
&=&\sum_{k=r+1}^{r+1} \sum_{h=k+1}^{k+n-r-1} C_{h}^{n} \cdot C_{k}^{h}-
\sum_{k=1}^{r} \sum_{h=k+n-r}^{k+n-r} C_{h}^{n} \cdot C_{k}^{h}\\
&=&\sum_{h=r+2}^{n} C_{h}^{n} \cdot C_{r+1}^{h}-
\sum_{k=1}^{r} C_{k+n-r}^{n} \cdot C_{k}^{k+n-r}.
\end{eqnarray*}
Since the first summation can be written as
\begin{eqnarray*}    
\sum_{h=r+2}^{n} C_{h}^{n} \cdot C_{r+1}^{h}&=&
\sum_{h=r+2}^{n-r} C_{h}^{n} \cdot C_{r+1}^{h}+
\sum_{h=n-r+1}^{n} C_{h}^{n} \cdot C_{r+1}^{h}\\
&=&\sum_{h=r+2}^{n-r} C_{h}^{n} \cdot C_{r+1}^{h}+
\sum_{k=1}^{r} C_{k+n-r}^{n} \cdot C_{r+1}^{k+n-r},
\end{eqnarray*}
we get 
\begin{eqnarray*}    
f(n,r+1)-f(n,r)=\sum_{h=r+2}^{n-r} C_{h}^{n} \cdot C_{r+1}^{h}
&+&\sum_{k=1}^{r} C_{k+n-r}^{n} \cdot C_{r+1}^{k+n-r}\\
&-&\sum_{k=1}^{r} C_{k+n-r}^{n} \cdot C_{k}^{k+n-r}.
\end{eqnarray*}

Assuming $1\le k\le r$, we will show that $C_{r+1}^{k+n-r}>C_{k}^{k+n-r}$.
We can express the ratio $C_{r+1}^{k+n-r}/C_{k}^{k+n-r}$ as follows: 
\begin{eqnarray*} 
\frac{C_{r+1}^{k+n-r}}{C_{k}^{k+n-r}}
&=&\frac{(k+n-r)!}{(r+1)!(k+n-2r-1)!}\div\frac{(k+n-r)!}{k!(n-r)!}\\
&=&\frac{k!(n-r)!}{(r+1)!(k+n-2r-1)!}\\
&=&\frac{k!(n-r)\cdots(n-2r+k)(n-2r+k-1)!}{(r+1)\cdots(k+1)k!(n-2r+k-1)!}\\
&=&\frac{(n-r)\cdots(n-2r+k)}{(r+1)\cdots(k+1)}.\\
\end{eqnarray*}
Note that there are $r-k+1$ factors both in the numerator and 
the denominator of the obtained fraction. Therefore, we can write
\begin{eqnarray*} 
\frac{C_{r+1}^{k+n-r}}{C_{k}^{k+n-r}}
=\frac{n-r}{r+1}\cdot\frac{n-r-1}{r}\cdot\cdots\cdot\frac{n-2r+k}{k+1}.
\end{eqnarray*}
The condition $1\le r < \lfloor n/2 \rfloor$ implies that $n>2r+1$; consequently we have
$$n-r > r+1, \: n-r-1> r, \: \ldots, \: n-2r+k > k+1.$$
Therefore $C_{r+1}^{k+n-r}/C_k^{k+n-r}>1$, which implies that 
$C_{r+1}^{k+n-r}>C_k^{k+n-r}$.
 
It follows that 
$$\sum_{k=1}^{r} C_{k+n-r}^{n} \cdot C_{r+1}^{k+n-r}>
\sum_{k=1}^{r} C_{k+n-r}^{n} \cdot C_{k}^{k+n-r},$$
and $f(n,r+1)-f(n,r)>0$.
So, if $1\le r < \lfloor n/2 \rfloor $, then $f(n,r+1)>f(n,r)$.
Since $f(n,r)=f(n,n-r)$, 
the maximum of $f(n,r)$ occurs when $r=\lfloor n/2 \rfloor$.
\qed 
\end{proof}

To better illustrate the properties of $f(n,r)$, we derive explicit formulas 
for the first three values of $r$.
Using the well-known identity
\begin{equation}
\sum_{h=k}^{n} C_{h}^{n} \cdot C_{k}^{h}=2^{n-k}C^n_k,
\end{equation}
we find
\begin{displaymath}
 \begin{array}{rl}
   f(n,1)&=  n2^{n-1}-n+1, \\
   f(n,2)&=n2^{n-1}-2n+\frac{n(n-1)}{ 2}(2^{n-2}-1)+1, \\
   f(n,3)&= n2^{n-1}-(n^2+n) + \frac{n(n-1)(n+4)}{6} (2^{n-3}-1)+1.
 \end{array}
\end{displaymath}

Some numerical values of $f(n,r)$ are shown in Table~\ref{tab:atomcomp}.
The figures in boldface type are the maxima for a fixed $n$.
The row marked \emph{max} shows the maximal quotient complexity of the atoms of $L$.
The row marked \emph{ratio} shows the value of 
$f(n,\lfloor n/2 \rfloor)/f(n-1,\lfloor (n-1)/2 \rfloor)$, for $n\ge 2$. 
It appears that this ratio converges to 3. 
For example, for $n = 100$ it is approximately 3.0002.

\begin{table}[ht]
\caption{Maximal quotient complexity of atoms.}
\label{tab:atomcomp}
\begin{center}
$
\begin{array}{| c| c|c| c|c| c|c| c|c| c|c|c|}    
\hline
\ \ n \ \ 
&\ \ 1 \ \ &\ \ 2 \ \ &  \ 3 \ &\ 4 \ &\ 5 \ &  \ 6 \ &\ 7 \ &\ 8 \ & \ 9 \ &10 &\cdots  
\\
\hline
\hline  
$r=0$
&  \bf 1 &  \bf 3 & \ 7  \ & \ 15 \  & \ 31 \ & \ 63  \  & \  127 \ & \ 255  \  & 511& 1,023  & \cdots  \\
\hline  
$r=1$
&  \bf 1 & \bf 3  & \ \bf 10  \ & \ 29  \  & \ 76 \ & \  187  \  & \  442  \ & \ 1,017 \  & 2,296 &5,111 &\cdots \\
\hline  
$r=2$ 
&  \ast &  \bf 3 & \ \bf 10 \ & \ \bf 43  \  & \ \bf 141  \ & \ 406   \  & \ 1,086  \ & \  2,773 \  & 6,859 &
16,576  &\cdots\\
\hline  
$r=3$
&  \ast & \ast  & \ 7 \ & \  29 \  & \ \bf 141 \ & \ \bf 501  \  & \ \bf  1,548  \ & \  4,425 \  &12,043 &31,681  &\cdots \\
\hline  
$r=4$
&  \ast & \ast  & \ \ast  \ & \  15 \  & \ 76 \ & \ 406  \  & \ \bf 1,548   \ & \ \bf 5,083  \  & \bf 15,361 & 44,071 & \cdots\\
\hline  
$r=5$
&  \ast & \ast  & \ \ast  \ & \  \ast \  & \ 31 \ & \ 187  \  & \  1,086   \ & \ 4,425  \  & \bf 15,361 & \bf 48,733 & \cdots\\
\hline\hline
\ max \
& 1 & 3  & \ 10 \ & \ 43  \  & \ 141 \ & \ 501\  & \ 1,548 \ & \ 5,083  \ & \ 15,361\ &\  48,733 \ &\cdots \\
\hline  
ratio
& - & 3  & \ 3.33 \ & \ 4.30  \  & \ 3.28 \ & \ 3.55 \  & \ 3.09 \ & \ 3.28 \  & 3.02 & 3.17  & \cdots \\
\hline
\end{array}
$
\end{center}
\end{table}



\section{Automata and \'Atomata of Regular Languages}
\label{sec:aut}

If $\Sig$ is a non-empty finite alphabet, then $\Sig^*$ is 
the free monoid generated by $\Sig$.
A \emph{word} is any element of $\Sig^*$, and the empty word 
is $\eps$. 
A \emph{language} over $\Sig$ is any subset of $\Sig^*$. 
The \emph{reverse of a language} $L$ is denoted 
by $L^R$ and defined as $L^R=\{w^R\mid w\in L\}$, where $w^R$ is $w$ spelled backwards.

A~\emph{nondeterministic finite automaton (NFA)} is a quintuple 
$\cN=(Q, \Sig, \eta, I,F)$, where 
$Q$ is a finite, non-empty set of \emph{states}, 
$\Sig$ is a finite non-empty \emph{alphabet}, 
$\eta:Q\times \Sig\to 2^Q$ is the  \emph{transition function},
$I\subseteq  Q$ is the set of  \emph{initial states},
and $F\subseteq Q$ is the set of \emph{final states}.
As usual, we extend the transition function to functions 
$\eta':Q\times \Sig^*\to 2^Q$, and 
$\eta'':2^Q\times \Sig^*\to 2^Q$.
We do not distinguish these functions notationally, but use 
$\eta$ for all three.

The \emph{language accepted} by an NFA $\cN$ is 
$L(\cN)=\{w\in\Sig^*\mid \eta(I,w)\cap F\neq \emp\}$.
Two NFA's are \emph{equivalent} if they accept the same language. 
The \emph{right language} of a state $q$ of $\cN$ is
$L_{q,F}(\cN)=\{w\in\Sig^* \mid \eta(q,w)\cap F\neq\emp\}$.
The \emph{right language} of a set $S$ of states of $\cN$ is
$L_{S,F}(\cN)=\bigcup_{q\in S} L_{q,F}(\cN)$; hence
$L(\cN)=L_{I,F}(\cN)$.
A~state is \emph{empty} if its right language is empty.
Two states  of an NFA are \emph{equivalent} if their right 
languages are equal. 
The \emph{left language} of a state $q$ of $\cN$ is
$L_{I,q}=\{w\in\Sig^* \mid q\in \eta(I,w)\}$.
A state is \emph{unreachable} if its left language is empty.
An NFA is \emph{trim} if it has no empty or unreachable states.

A \emph{deterministic finite automaton (DFA)} is a quintuple 
$\cD=(Q, \Sig, \delta, q_0,F)$, where
$Q$, $\Sig$, and $F$ are as in an NFA, 
$\delta:Q\times \Sig\to Q$ is the transition function, 
and $q_0$ is the initial state. 
A DFA is an NFA in which the set of initial states is $\{q_0\}$
and the range of $\delta$ is restricted to 
singletons $\{q\}$, $q\in Q$.
Note that an empty state of $\cN$ is an unreachable state of $\cN^\rev$ and vice versa.

We use the following operations on automata: 
\be
\item
The \emph{determinization} operation $\deter$ 
applied to an NFA $\cN$ yields a DFA $\cN^{\deter}$ obtained by 
the well-known subset construction, where only subsets  reachable 
from the initial subset of $\cN^\deter$ are used and the empty subset, 
if present, is included. 
\item
The \emph{reversal} operation $\rev$ applied to an NFA $\cN$ yields 
an NFA $\cN^{\rev}$, where sets of initial and final states of $\cN$ are 
interchanged and each transition between any two states 
is reversed. 
\ee

From now on we consider only non-empty regular languages.
Let $L$ be any such language, and let
its set of quotients be $K=\{K_0,\ldots, K_{n-1}\}$. 
One of the quotients of $L$ is $L$ itself;
this is called the \emph{initial} quotient and is denoted by $K_{in}$.
A quotient is  \emph{final} if it contains the empty word $\eps$.
The set of final quotients is  $F=\{K_i \mid \eps \in K_i\}$.

In the following definition we use a one-to-one correspondence 
$K_i \leftrightarrow  {\mathbf K}_i$ between quotients $K_i$ of 
a language $L$ and the states ${\mathbf K}_i$ of the \emph{quotient DFA} 
$\cD$ defined below.
We refer to the ${\mathbf K}_i$ as \emph{quotient symbols}.

\begin{definition}
\label{def:quotientDFA}
The \emph{quotient DFA} of $L$ is 
$\cD=({\mathbf K},\Sig,\delta, {\mathbf K_{in}}, {\mathbf F})$, where
${\mathbf K}=\{{\mathbf K_0},\ldots,{\mathbf K_{n-1}}\}$,
${\mathbf K_{in}}$ corresponds to $K_{in}$,
${\mathbf F}=\{{\mathbf K_i}\mid K_i \in F\}$, and 
$\delta({\mathbf K}_i, a)={\mathbf K}_j$ if and only if 
$a^{-1}K_i = K_j$, for all ${\mathbf K_i},{\mathbf K_j}\in {\mathbf K}$ and $a\in \Sig$.
\end{definition}

In a quotient DFA
the right language of ${\mathbf K_i}$ is $K_i$, and 
its left language 
is $\{w\in\Sig^*\mid w^{-1}L=K_i\}$.
The latter is the equivalence class of the Nerode equivalence~\cite{Ner58}.
The  language $L(\cD)$  is the right language of ${\mathbf K_{in}}$, and hence 
$L(\cD)=L$.
Also, DFA $\cD$ is minimal, since all quotients in $K$ are distinct.

It follows from the definition of an atom, that a regular language  $L$ has at most $2^n$ atoms. 
An atom is \emph{initial} if it has $L$ (rather than $\ol{L}$) as a term;
it is \emph{final} if it contains~$\eps$.
Since $L$ is non-empty, it has at least one quotient containing~$\eps$. 
Hence it has exactly one final atom, the atom 
$\widehat{K_0}\cap \cdots \cap \widehat{K_{n-1}}$, where 
$\widehat{K_i}=K_i$ if $\eps\in K_i$, and $\widehat{K_i}=\ol{K_i}$ otherwise.
Let  $A=\{A_0,\ldots, A_{m-1}\}$ be the set of atoms 
of $L$.
By convention, $I$ is the set of initial atoms and $A_{m-1}$ is the final atom.

As above, we use a one-to-one correspondence 
$A_i \leftrightarrow  {\mathbf A}_i$ between atoms $A_i$ of a language 
$L$ and the states 
${\mathbf A}_i$ of the NFA $\cA$ defined below.
We refer to the ${\mathbf A}_i$ as \emph{atom symbols.}

\begin{definition}
\label{def:atomaton}
The \emph{\'atomaton}\footnote{In~\cite{BrTa11}, the intersection 
$A_\emp=\ol{K_0}\cap \cdots\cap \ol{K_{n-1}}$ was not considered 
an atom. It was shown that the right language of state 
${\mathbf A_i}$ is the atom $A_i$, the left language of ${\mathbf A_i}$ 
is non-empty, the language of the \'atomaton $\cA$ is $L$, and $\cA$ is trim.
If the intersection $A_\emp$ of all the complemented quotients is non-empty, 
then $A_\emp$ is an atom and $\cA$ is no longer trim
because state ${\mathbf  A}_\emp$ is not reachable from any initial state.
} of $L$
 is the NFA $\cA=({\mathbf A},\Sig,\eta, {\mathbf I},\{{\mathbf A}_{m-1}\}),$
 where ${\mathbf A}=\{{\mathbf A}_i\mid A_i\in A\}$,
 ${\mathbf I}=\{{\mathbf A}_i\mid A_i\in I\}$, 
 ${\mathbf A}_{m-1}$ corresponds to $A_{m-1}$,
 and ${\mathbf A}_j \in \eta({\mathbf A}_i, a)$ if and only if 
$aA_j \subseteq A_i$, for all ${\mathbf A_i},{\mathbf A_j}\in {\mathbf A}$ and $a\in\Sig$.
\end{definition}

\begin{example}
\label{ex:a2}
Let $L_2\subseteq \{a,c\}^*$  be defined by the quotient equations  below (left) and recognized by the DFA $\cD_2$ of Fig.~\ref{fig:automata}~(a).
The equations for the atoms of $L_2$ are below (right), and the \'atomaton $\cA_2$
is in Fig.~\ref{fig:automata}~(b); here each atom is denoted by $A_P$, where $P$ is the set of uncomplemented quotients. 
Thus $K_0\cap\ol{K_1}$ becomes $A_{\{0\}}$, etc., and we represent the sets in the subscripts without brackets and commas.
The reverse $\cD_2^\rev$ of $\cD_2$ is in Fig.~\ref{fig:automata}~(c). The determinized reverse $\cD_2^{\rev\deter}$ is 
in Fig.~\ref{fig:automata}~(d);
this is the minimal DFA for $L_2^R$, the reverse of $L_2$.
The reverse  $\cA_2^\rev$ of the \'atomaton is in Fig.~\ref{fig:automata}~(e).
Note that $\cD_2^{\rev\deter}$ and $\cA_2^\rev$ are isomorphic.
\begin{alignat*}{2}
K_0&= aK_1 \cup cK_0, & 
	\qquad  K_0\cap K_1  &= a(K_0\cap K_1) \cup c[(K_0\cap K_1)\cup (K_0\cap\ol{K_1})] ,\\  
K_1&= aK_0 \cup cK_0  \cup \eps, &
	K_0\cap \ol{K_1}&=a(\ol{K_0}\cap K_1),\\
&&
\ol{K_0}\cap K_1&=  a(K_0\cap\ol{K_1})\cup \eps,\\
& &
	\ol{K_0}\cap \ol{K_1}&=a(\ol{K_0}\cap \ol{K_1}) \cup c[(\ol{K_0}\cap \ol{K_1}) \cup 
	(\ol{K_0} \cap K_1)].
\end{alignat*}
\end{example} 


\begin{figure}[h]
\begin{center}
\input aut.eepic
\end{center}
\caption{(a) DFA $\cD_2$; (b) \'Atomaton $\cA_2$; (c) NFA $\cD_2^\rev$; (d) 
DFA $\cD_2^{\rev\deter}$; (e)  DFA $\cA_2^\rev$.} 
\label{fig:automata}
\end{figure}

The next theorem from~\cite{Brz63}, also discussed in~\cite{BrTa11}, 
will be used several times.
\begin{theorem}[Determinization]
\label{thm:Brz}
If  an NFA $\cN$ has no empty states and $\cN^\rev$ is deterministic, 
then $\cN^\deter$ is minimal.
\end{theorem}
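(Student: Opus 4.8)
The plan is to prove minimality of $\cN^\deter$ by establishing that its reachable subset-states are pairwise distinguishable. Since the determinization operation $\deter$ keeps only subsets reachable from the initial subset, reachability of every state is automatic, so the only thing to verify for minimality is that distinct reachable subsets have distinct right languages. First I would record the right language of a subset-state: for a reachable $S\subseteq Q$ viewed as a state of $\cN^\deter$, reading a word $w$ leads to $\eta(S,w)$, which meets $F$ exactly when $w\in L_{q,F}(\cN)$ for some $q\in S$. Hence the right language of $S$ in $\cN^\deter$ is $\bigcup_{q\in S}L_{q,F}(\cN)$, and distinguishing two subsets reduces to understanding the individual state right languages $L_{q,F}(\cN)$.

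The heart of the argument is to derive, from the hypothesis that $\cN^\rev$ is deterministic, that the languages $L_{q,F}(\cN)$ are pairwise disjoint. Because $\cN^\rev$ is deterministic it has a single initial state, so $\cN$ has a single final state, say $F=\{f\}$. I would use the standard path-reversal correspondence: a word $v$ labels a path from $f$ to $q$ in $\cN^\rev$ precisely when $v^R$ labels a path from $q$ to $f$ in $\cN$, so the left language of $q$ in $\cN^\rev$ equals $(L_{q,F}(\cN))^R$. In the deterministic automaton $\cN^\rev$, reading any word from its unique initial state $f$ reaches a unique state, so distinct states have disjoint left languages there; reversing, the languages $L_{q,F}(\cN)$ are pairwise disjoint. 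The no-empty-states hypothesis supplies the complementary fact that each $L_{q,F}(\cN)$ is nonempty.

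With disjointness and nonemptiness in hand, distinguishability is mechanical. Given distinct reachable subsets $S\neq T$, choose $q$ in their symmetric difference, say $q\in S\setminus T$, and any word $w\in L_{q,F}(\cN)$, which exists since this language is nonempty. Then $w$ lies in the right language of $S$. By pairwise disjointness, $w\in L_{p,F}(\cN)$ forces $p=q$, and since $q\notin T$ no state $p\in T$ accepts $w$; thus $w$ is not in the right language of $T$, so the two subset-states differ. The empty subset, if it is reachable, is covered by the same nonemptiness observation: every nonempty reachable subset has a nonempty right language and so differs from the sink. Hence all reachable subsets are pairwise distinguishable, and $\cN^\deter$ is the minimal DFA for $L(\cN)$.

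The main obstacle is the middle step, transferring the determinism of $\cN^\rev$ into disjointness of the forward right languages $L_{q,F}(\cN)$: everything hinges on stating the path-reversal correspondence cleanly and on reading off, from determinism, that the single initial state reaches a unique state under each input. Once that disjointness is secured alongside the nonemptiness coming from the no-empty-states hypothesis, the distinguishing word is produced directly and the conclusion follows.
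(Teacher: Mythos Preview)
The paper does not actually prove this theorem; it is quoted from \cite{Brz63} (see also \cite{BrTa11}) and used as a black box throughout. So there is no ``paper's own proof'' to compare against.

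Your argument is correct and is essentially the classical proof of Brzozowski's result. The key step---that determinism of $\cN^\rev$ forces the right languages $L_{q,F}(\cN)$ to be pairwise disjoint---is handled properly via the path-reversal correspondence together with the uniqueness of the run from the single initial state in a DFA; the no-empty-states hypothesis then makes each $L_{q,F}(\cN)$ nonempty, and any state in the symmetric difference of two reachable subsets yields a distinguishing word. One small remark: in this paper's conventions a DFA has a total transition function (the range of $\delta$ consists of singletons), so the left languages in $\cN^\rev$ in fact partition $\Sigma^*$; this is slightly stronger than the disjointness you need, but is exactly what your argument uses.
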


It was shown in~\cite{BrTa11} that the \'atomaton  $\cA$ of $L$
with reachable atoms only is isomorphic to the trimmed version of $\cD^{\rev\deter\rev}$, 
where $\cD$ is the quotient DFA of $L$.
With our new definition,  $\cA$ 
is isomorphic to $\cD^{\rev\deter\rev}$.
We now study this isomorphism in detail, along with the isomorphism between 
$\cA^\rev$ and $\cD^{\rev\deter}$.
We deal with the following automata:
\be
\item
Quotient DFA $\cD=({\mathbf K},\Sig,\delta, {\mathbf K_{in}}, {\mathbf F})$ of 
$L$ whose states are \emph{quotient symbols}.
\item
The reverse $\cD^{\rev}=({\mathbf K},\Sig,\delta^\rev,{\mathbf F},\{{\mathbf K_{in}}\})$ 
of  $\cD$. 
The states in ${\mathbf K}$ are still \emph{quotient symbols,} but their right 
languages  are no longer quotients of $L$.
\item
The determinized reverse $\cD^{\rev\deter}=(S,\Sig,\alpha, {\mathbf F}, {G})$,
where ${S}\subseteq 2^{\mathbf K}$ and
${G}=\{S_i\in S \mid {\mathbf K}_{in} \in S_i\}$.
The states in $S$ are \emph{sets of quotient symbols}, i.e., subsets of ${\mathbf K}$.
Since $(\cD^{\rev})^\rev=\cD$ is deterministic and all of its states are reachable, $\cD^\rev$ has no empty states.
By Theorem~\ref{thm:Brz}, DFA $\cD^{\rev\deter}$ is minimal and accepts $L^R$; 
hence it is isomorphic to the quotient DFA of $L^R$.
\item
The reverse $\cD^{\rev\deter\rev} = (S,\Sig,\alpha^\rev, G, \{\mathbf F\})$ of 
$\cD^{\rev\deter}$;
here the states are still \emph{sets of quotient symbols}.
\item
The \'atomaton
$\cA=({\mathbf A},\Sig,\eta, {\mathbf I},\{{\mathbf A}_{m-1}\})$,
whose states  are \emph{atom symbols.}

\item
The reverse 
$\cA^\rev=({\mathbf A},\Sig,\eta^\rev, {\mathbf A}_{m-1}, {\mathbf I})$ of $\cA$,
whose states are still \emph{atom symbols,} though their right languages are no longer atoms.
\ee
The results from~\cite{BrTa11} and  our new definition of 
atoms imply that $\cA^\rev$ is a minimal DFA that accepts $L^R$. 
It follows that  $\cA^\rev$ is  isomorphic to $\cD^{\rev\deter}$.
Our next result makes this isomorphism precise.

\begin{proposition}[Isomorphism]
\label{prop:isomorphism}
Let $\varphi: {\mathbf A} \to {S}$ be the mapping assigning to state 
${\mathbf A}_j$, given by
$A_j=K_{i_0}\cap\cdots\cap K_{i_{n-r-1}}\cap\ol{K_{i_{n-r}}}
\cap\cdots\cap \ol{K_{i_{n-1}}}$ of $\cA^\rev$, the set 
$\{K_{i_0},\ldots, K_{i_{n-r-1}}\}$.
Then $\varphi$ is a DFA isomorphism between $\cA^\rev$ and 
$\cD^{\rev\deter}$. 
\end{proposition}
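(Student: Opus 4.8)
The plan is to verify directly that $\varphi$ is a well-defined bijection that respects initial states, final states, and transitions. First I would check that $\varphi$ is well defined and injective. By the results cited from~\cite{BrTa11} and the new definition of atoms, $\cA^\rev$ is a minimal DFA for $L^R$, so it is isomorphic to the quotient DFA $\cD^{\rev\deter}$ of $L^R$; in particular $\cA^\rev$ and $\cD^{\rev\deter}$ have the same number of states, say $m=|S|$. Thus it suffices to show $\varphi$ maps distinct atom symbols to distinct subsets and that the induced map commutes with transitions; surjectivity then follows by counting. Injectivity is immediate: an atom $A_j$ is determined by which of $K_0,\dots,K_{n-1}$ appear uncomplemented, so the set $\{K_{i_0},\dots,K_{i_{n-r-1}}\}$ of uncomplemented terms uniquely determines $A_j$, and hence ${\mathbf A}_j$.

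Next I would pin down the initial and final states. In $\cA^\rev$ the single initial state is ${\mathbf A}_{m-1}$, the final atom, whose uncomplemented terms are exactly the final quotients $\{K_i \mid \eps\in K_i\}$; this is precisely ${\mathbf F}$, which is the initial state of $\cD^{\rev\deter}$. Conversely, the final states of $\cA^\rev$ are the initial atoms ${\mathbf I}$ of $\cA$, i.e.\ those atoms having $L=K_{in}$ as an uncomplemented term; under $\varphi$ these are exactly the subsets of ${\mathbf K}$ containing ${\mathbf K}_{in}$, which is the set $G$ of final states of $\cD^{\rev\deter}$. So $\varphi$ matches up initial and final states on both sides.

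The heart of the argument, and the step I expect to be the main obstacle, is the transition compatibility: for each $a\in\Sig$ and each atom symbol ${\mathbf A}_i$, I must show $\varphi(\eta^\rev({\mathbf A}_i,a)) = \alpha(\varphi({\mathbf A}_i),a)$. Unwinding definitions, $\eta^\rev({\mathbf A}_i,a) = {\mathbf A}_j$ where $a A_i \subseteq A_j$ (using that $\cA^\rev$ is deterministic, so there is a unique such $j$), while $\alpha$ is the subset-construction transition of $\cD^{\rev\deter}$, namely $\alpha(S',a) = \{{\mathbf K}_k \mid \delta^\rev({\mathbf K}_\ell,a)={\mathbf K}_k \text{ for some } {\mathbf K}_\ell\in S'\} = \{{\mathbf K}_k \mid {\mathbf K}_\ell \in \delta({\mathbf K}_k,a) \text{ for some } {\mathbf K}_\ell\in S'\}$, i.e.\ the set of $K_k$ such that $a^{-1}K_k$ is one of the quotients named in $S'$. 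So the claim reduces to: if $a A_i \subseteq A_j$, then the uncomplemented quotients of $A_j$ are exactly $\{K_k \mid a^{-1}K_k \text{ is an uncomplemented quotient of } A_i\}$. I would prove this by translating $aA_i\subseteq A_j$ into the quotient language: writing $A_i$ and $A_j$ as intersections of (complemented) quotients and using $a^{-1}(X\cap Y) = a^{-1}X \cap a^{-1}Y$ together with $a^{-1}\ol{K} = \ol{a^{-1}K}$, the containment $aA_i\subseteq A_j$ is equivalent to $A_j \subseteq a^{-1}A_i$, hence to the statement that every uncomplemented term $K_k$ of $A_j$ satisfies $a^{-1}K_k \supseteq A_i$ and is therefore an uncomplemented term of $A_i$ (since atoms are the minimal nonempty Boolean combinations and each quotient is a union of atoms, $A_i \subseteq a^{-1}K_k = K_\ell$ forces $K_\ell$ to be one of the uncomplemented terms of $A_i$), and symmetrically; the nonemptiness of $A_i$ is what rules out the degenerate case. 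Working this equivalence out cleanly in both directions is the one place requiring care; once it is in hand, $\varphi$ is a bijection commuting with all transitions and fixing initial/final states, hence a DFA isomorphism.
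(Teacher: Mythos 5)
Your proof is correct and follows essentially the same route as the paper's: match initial and final states, then unwind both transition functions to the same combinatorial condition on the sets of uncomplemented quotients (the paper states it as $P_j\supseteq\delta(P_i,a)$ and $P_j\cap\delta(Q\setminus P_i,a)=\emp$, which is equivalent to your $P_j=\{K_k\mid a^{-1}K_k\in P_i\}$). One small slip: $aA_i\subseteq A_j$ is equivalent to $A_i\subseteq a^{-1}A_j$, not to $A_j\subseteq a^{-1}A_i$ as written -- but the consequences you then draw (every uncomplemented term $K_k$ of $A_j$ has $A_i\subseteq a^{-1}K_k$, and the nonemptiness of the atom $A_i$ forces $a^{-1}K_k$ to be an uncomplemented term of $A_i$) are exactly those of the correct containment, so nothing downstream breaks.
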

\begin{proof}
The initial state ${\mathbf A}_{m-1}$ of $\cA^\rev$ is mapped to 
the set of all quotients containing $\eps$, which is precisely 
the initial state ${\mathbf F}$ of $\cD^{\rev\deter}$.
Since the quotient $L$ appears uncomplemented in every initial 
atom $A_i\in I$,  
the image $\varphi({\mathbf A}_i)$ contains $L$.
Thus the set of final states of $\cA^\rev$ is mapped to the set 
of final states of $\cD^{\rev\deter}$.

It remains to be shown,  for all ${\mathbf A}_i, {\mathbf A}_j\in {\mathbf A}$ 
and $a\in\Sig$, that $\eta^\rev({\mathbf A}_j,a)={\mathbf A}_i$ if and 
only if $\alpha(\varphi({\mathbf A}_j),a)=\varphi({\mathbf A}_i)$.

Consider atom $A_i$ with $P_i$  as the set of quotients that 
appear uncomplemented in $A_i$.
Also define the corresponding set $P_j$ for $A_j$.
If there is a missing quotient $K_h$ in the intersection
$a^{-1}A_i$, we use $a^{-1}A_i\cap (K_h\cup \ol{K_h})$.
We do this for all missing quotients until we obtain a union of atoms.
Hence ${\mathbf A}_j \in \eta({\mathbf A}_i,a)$  
 can hold in $\cA$ if and only if 
$P_j\supseteq \delta(P_i,a)$ and 
$P_j\cap \delta(Q\setminus P_i,a)=\emp$.
It follows that in $\cA^\rev$ we have $\eta^\rev({\mathbf A}_j,a)={\mathbf A}_i$
if and only if 
$P_j\supseteq \delta(P_i,a)$ and
$P_j\cap \delta(Q\setminus P_i,a)=\emp$.

Now consider $\cD^{\rev\deter}$.
Let $P_i$ be any subset of $Q$; then the successor set of $P_i$ in $\cD$ is 
$\delta(P_i,a)$. Let $\delta(P_i,a)=P_k$.
So in $\cD^{\rev}$, we have $P_i\in \delta^\rev(P_k,a)$.
But suppose that state $q$ is not in $\delta(Q,a)$; then $\delta^\rev(q,a)=\emp$.
Consequently, we also have $P_i\in \delta^\rev(P_k\cup\{q\},a)$.
It follows that for any $P_j$ containing $\delta(P_i,a)$ and satisfying 
$P_j\cap \delta(Q\setminus P_i,a)=\emp$, 
we also have $\alpha(P_j,a)=P_i$. 

We have now shown that $\eta^\rev({\mathbf A}_j,a)={\mathbf A}_i$ if and only if
$\alpha(P_j,a)=P_i$, for all subsets $P_i, P_j\in S$, that is, if and only if 
$\alpha(\varphi({\mathbf A}_j),a)=\varphi({\mathbf A}_i)$.
\qed
\end{proof}

\begin{corollary}
\label{cor:isomorphism}
The mapping $\varphi$ is an NFA isomorphism between 
$\cA$ and $\cD^{\rev\deter\rev}$.
\end{corollary}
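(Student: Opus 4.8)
The plan is to derive this corollary from the Isomorphism Proposition (Proposition~\ref{prop:isomorphism}) by applying the reversal operation $\rev$ to both automata involved. I would first recall that reversing an NFA twice returns the original NFA, so that $(\cA^\rev)^\rev=\cA$ and $(\cD^{\rev\deter})^\rev=\cD^{\rev\deter\rev}$. Since $\varphi$ is established to be a DFA isomorphism between $\cA^\rev$ and $\cD^{\rev\deter}$, and a bijection on state sets that commutes with transitions, initial states, and final states continues to have all these properties when the roles of initial and final states are interchanged and every transition is reversed, the same map $\varphi$ is an NFA isomorphism between the reversed automata $(\cA^\rev)^\rev=\cA$ and $(\cD^{\rev\deter})^\rev=\cD^{\rev\deter\rev}$.

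More concretely, I would spell out the three conditions an NFA isomorphism must satisfy. For the transition condition: Proposition~\ref{prop:isomorphism} gives ${\mathbf A}_j\in\eta^\rev({\mathbf A}_i,a)$ (equivalently $\eta^\rev({\mathbf A}_j,a)={\mathbf A}_i$ in the deterministic reading used there) if and only if $\alpha(\varphi({\mathbf A}_j),a)=\varphi({\mathbf A}_i)$. Taking reverses, $\eta({\mathbf A}_i,a)\ni{\mathbf A}_j$ iff $\varphi({\mathbf A}_j)\in\alpha^\rev(\varphi({\mathbf A}_i),a)$, which is exactly the transition compatibility for $\cA$ and $\cD^{\rev\deter\rev}$. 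For the initial and final states: in $\cA$ the initial states are ${\mathbf I}$ and the unique final state is ${\mathbf A}_{m-1}$; in $\cD^{\rev\deter\rev}$ the initial states form $G$ and the unique final state is ${\mathbf F}$. The proof of Proposition~\ref{prop:isomorphism} already showed $\varphi({\mathbf A}_{m-1})={\mathbf F}$ and $\varphi({\mathbf I})=G$ (the atoms containing $\eps$ map to the set of quotients containing $\eps$, and the initial atoms map exactly to the sets of quotient symbols containing ${\mathbf K}_{in}$). Under reversal these two roles simply swap on both sides, so $\varphi$ sends initial states to initial states and final states to final states.

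The only thing requiring a modicum of care is the fact that a DFA isomorphism, read as an NFA isomorphism, reverses cleanly: the reversal operation $\rev$ defined in the excerpt acts on an NFA purely by swapping $I$ with $F$ and reversing each transition, and an isomorphism is preserved under any such syntactic transformation applied identically to both automata. I do not expect a genuine obstacle here; the corollary is essentially a formal consequence of the proposition together with the involutivity of $\rev$, and the main (already-completed) work lies in Proposition~\ref{prop:isomorphism} itself. I would therefore keep the proof of the corollary to a few sentences invoking reversal and citing the proposition.

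\begin{proof}
Apply the reversal operation to both automata of Proposition~\ref{prop:isomorphism}. Since $(\cA^\rev)^\rev=\cA$ and $(\cD^{\rev\deter})^\rev=\cD^{\rev\deter\rev}$, and since a bijection between state sets that respects transitions, initial states, and final states continues to do so after initial and final states are interchanged and all transitions are reversed, the map $\varphi$ is an NFA isomorphism between $\cA$ and $\cD^{\rev\deter\rev}$. Concretely, the transition equivalence ${\mathbf A}_j\in\eta^\rev({\mathbf A}_i,a)$ iff $\alpha(\varphi({\mathbf A}_j),a)=\varphi({\mathbf A}_i)$ from Proposition~\ref{prop:isomorphism} becomes ${\mathbf A}_j\in\eta({\mathbf A}_i,a)$ iff $\varphi({\mathbf A}_j)\in\alpha^\rev(\varphi({\mathbf A}_i),a)$, while the facts $\varphi({\mathbf A}_{m-1})={\mathbf F}$ and $\varphi({\mathbf I})=G$ established in that proof give the correspondence of final and initial states of $\cA$ with the final and initial states of $\cD^{\rev\deter\rev}$.
\qed
\end{proof}
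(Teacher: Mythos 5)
Your proof is correct and matches the paper's (implicit) reasoning: the paper states this corollary without proof as an immediate consequence of Proposition~\ref{prop:isomorphism}, and the reversal-is-an-involution argument you spell out is exactly the intended justification. No gaps.
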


In the remainder of the paper it is more convenient to use 
the $\cD^{\rev\deter\rev}$ representation of \'atomata, rather than 
that of Definition~\ref{def:atomaton}.

\section{The Witness Languages and Automata}
\label{sec:witness}

We now introduce a class $\{L_n\mid n\ge 2\}$ of regular languages defined by the quotient DFA's 
$\cD_n$ given below; we shall prove that the atoms of each language 
$L_n=L(\cD_n)$ in this class  meet the worst-case quotient complexity bounds.

\begin{definition}[Witness]
\label{def:witness}
For $n\ge 2$, let $\cD_n=(Q,\Sig,\delta, q_0, F)$, where $Q=\{0,\ldots,n-1\}$, 
$\Sig=\{a,b,c\}$, $q_0=0$, $F=\{n-1\}$, 
$\delta(i,a)=i+1 \mbox{ \rm mod } n$, 
$\delta(0,b)=1$, $\delta(1,b)=0$, $\delta(i,b)=i$ for $i>1$,
$\delta(i,c)=i$ for $0\le i\le n-2$, and
$\delta(n-1,c)=0$.
Let $L_n$ be the language accepted by $\cD_n$.
\end{definition}

For $n\ge 3$, the DFA of Definition~\ref{def:witness} is illustrated in Fig.~\ref{fig:witness}, and $\cD_2$ is the DFA of Example~\ref{ex:a2} ($a$ and $b$ coincide).
The DFA $\cD_n$  is minimal, since for $0\le i\le n-1$, state $i$ accepts $a^{n-1-i}$, and no other state accepts this word.
\medskip

A {\em transformation} of a set $Q$ is a mapping of $Q$ into itself. 
If $t$ is a transformation of $Q$ and  $i \in Q$, then $it$ is the {\it image} of $i$ under $t$.  
The set of all transformations of a  finite set $Q$ is a semigroup under composition, in fact, a monoid $\cT_Q$ of $n^n$ elements. 
A \emph{permutation} of $Q$ is a mapping of $Q$ \emph{onto} itself. 
A \emph{transposition} $(i,j)$ interchanges $i$ and $j$ and does not affect any other elements.
A~\emph{singular} transformation, denoted by $i\choose j$, has $it=j$ and $ht=h$ for all $h\neq i$.

In 1935 Piccard~\cite{Pic35} proved that three transformations of $Q$ are sufficient to generate $\cT_Q$. 
D\'enes~\cite{Den68}  studied more general generators;  we use his formulation:

\begin{theorem}[Transformations]
\label{thm:piccard1}
The  transformation monoid $\cT_Q$ can be generated by any cyclic
permutation of $n$ elements together with any transposition and any singular transformation. 
\end{theorem}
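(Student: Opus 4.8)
The plan is to split the argument along the two natural halves of $\cT_Q$: its group of units and its singular part. \emph{Phase one} handles the permutations: I would show that the cyclic permutation $c$ together with the transposition $\tau$ generate the whole symmetric group $S_n$ of permutations of $Q$. \emph{Phase two} then uses the singular transformation $s=\binom{p}{q}$ to descend from $S_n$ to all of $\cT_Q$. The two phases are genuinely independent, and the reason is a rank count: any word in the generators that contains even one factor of $s$ maps onto a set of size at most $n-1$, so it can never be a permutation. Consequently every permutation we need must already be built from $c$ and $\tau$ alone, while $s$ is solely responsible for the non-invertible transformations.

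For phase one, relabel $Q$ so that $c$ acts as $i\mapsto i+1 \bmod n$, i.e.\ $c=(0\,1\,\cdots\,n-1)$. Conjugating $\tau$ by powers of $c$ yields the family $c^{-k}\tau c^{k}$, a full $c$-orbit of transpositions obtained by shifting the two transposed points simultaneously around the cycle. Combining these shifted copies with one another and with $c$ recovers every transposition of $Q$, and since the transpositions generate $S_n$ we obtain $\langle c,\tau\rangle=S_n$. This is precisely the classical fact about generators of the symmetric group that underlies Piccard's result~\cite{Pic35}, and I would invoke it here; establishing it for the given cycle--transposition pair is the crux of the whole theorem.

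For phase two I would first manufacture every elementary collapsing. Conjugating $s$ by a permutation $\pi\in S_n$ gives $\pi^{-1}s\pi=\binom{p\pi}{q\pi}$, and since $S_n$ acts $2$-transitively on $Q$ (as $n\ge2$) I can send the ordered pair $(p,q)$ to any ordered pair $(i,j)$ with $i\neq j$. Hence all the maps $\binom{i}{j}$ lie in $\langle c,\tau,s\rangle$. It then remains to express an arbitrary $t\in\cT_Q$ of rank $r<n$ through these collapsings and $S_n$. I would collapse the kernel of $t$ class by class: within each kernel class I map its elements one at a time onto a chosen representative using maps $\binom{x}{\mathrm{rep}}$, producing after $n-r$ collapsings an idempotent $\varepsilon$ whose kernel equals that of $t$ and whose image is the set of representatives. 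Finally I choose a permutation $\rho\in S_n$ sending each representative $\mathrm{rep}(C)$ to $(\mathrm{rep}(C))t$; this is well defined and injective on the representatives because distinct kernel classes of $t$ have distinct images, and then $t=\varepsilon\rho$. As $\varepsilon$, $\rho$, and every $\binom{i}{j}$ lie in $\langle c,\tau,s\rangle$, so does $t$, giving $\cT_Q=\langle c,\tau,s\rangle$.

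The routine part is phase two: once all the elementary collapsings $\binom{i}{j}$ are available, the kernel-collapsing factorization is standard bookkeeping, with the rank dropping by exactly one at each collapsing. The real obstacle is phase one, namely confirming that the particular cyclic permutation and transposition at hand generate the full symmetric group. All the delicacy about how the two transposed points are positioned relative to the cycle is concentrated in that step, and it is the one I would treat with the most care.
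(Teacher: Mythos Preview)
The paper does not prove this theorem; it is quoted as a classical result attributed to Piccard and, in this generality, to D\'enes, so there is no in-paper argument to compare against. Your two-phase architecture is the natural one, and phase two is sound: conjugation by $S_n$ turns a single collapsing $\binom{p}{q}$ into all of them, and the kernel-collapsing factorisation $t=\varepsilon\rho$ is correct. Your rank observation---that any word containing $s$ has image of size at most $n-1$, hence the permutations in $\langle c,\tau,s\rangle$ are exactly those in $\langle c,\tau\rangle$---is also correct and, as it turns out, decisive.

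The genuine gap is in phase one, and it cannot be filled: the statement as written is false in full generality. An $n$-cycle together with a transposition $(i\,j)$ generate $S_n$ if and only if $\gcd(j-i,n)=1$. For $n=4$, take $c=(0\,1\,2\,3)$ and $\tau=(0\,2)$; one checks $\tau c\tau=c^{-1}$, so $\langle c,\tau\rangle$ is the dihedral group of order $8$, not $S_4$. In this example your conjugation step $c^{-k}\tau c^{k}$ produces only $(0\,2)$ and $(1\,3)$, and no combination of these with $c$ ever reaches $(0\,1)$. By your own rank argument the singular generator cannot supply the missing permutations, so $\langle c,\tau,s\rangle\ne\cT_Q$ here. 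You were right to flag phase one as ``the real obstacle'' and to note that the position of the transposed points matters; the issue is not delicacy but an actual obstruction.

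For the paper's purposes none of this matters: the only instance ever used is $a=(0\,1\,\cdots\,n-1)$ together with the adjacent transposition $b=(0\,1)$, where $\gcd(1,n)=1$ and your argument goes through cleanly. If you want a provable version of the theorem, add the coprimality hypothesis (or restrict to an adjacent transposition); with that amendment your sketch becomes a complete proof.
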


In any DFA $\cD=(Q,\Sig,\delta, q_0, F)$, each word $w$ in $\Sig^+$ performs
a transformation on $Q$ defined by $\delta(\cdot,w)$.
The set of all these transformations is the \emph{transformation semigroup} of $\cD$.
By Theorem~\ref{thm:piccard1}, the transformation semigroup of our witness $\cD_n$ has $n^n$ elements, since  $a$ is a cyclic permutation,  $b$ is a transposition and  $c$ is a singular transformation.

The following 
result of Salomaa, Wood and Yu~\cite{SWY04} concerning reversal is restated in our terminology.
\begin{theorem}[Transformations and Reversal]
\label{thm:SWY}
Let $\cD$ be a minimal DFA with $n$ states accepting a language $L$. If the transformation semigroup of $\cD$
has $n^n$ elements, then the quotient complexity of $L^R$ is $2^n$.
\end{theorem}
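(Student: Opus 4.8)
The plan is to compute the quotient complexity of $L^R$ as the size of a minimal DFA built from $\cD$ by the reversal and determinization operations, exploiting the fact that the hypothesis says the transformation semigroup of $\cD$ is the \emph{full} monoid $\cT_Q$, so that every map $Q\to Q$ is of the form $\delta(\cdot,w)$ for some $w\in\Sig^+$.

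First I would apply Theorem~\ref{thm:Brz} to $\cN=\cD^\rev$: its reverse $(\cD^\rev)^\rev=\cD$ is deterministic, and $\cD^\rev$ has no empty states because every state of the minimal DFA $\cD$ is reachable (an unreachable state of $\cD$ is exactly an empty state of $\cD^\rev$). Hence $(\cD^\rev)^\deter$ is a minimal DFA accepting $L(\cD^\rev)=L^R$, so the quotient complexity of $L^R$ equals the number of subsets of $Q$ that are reachable in the subset construction of $\cD^\rev$, the empty subset included. Since in this construction reading a word $w$ from a subset $S$ leads to $\{p\in Q\mid \delta(p,w^R)\in S\}$, the task reduces to showing that, starting from the initial subset $F$ (the final states of $\cD$), all $2^n$ subsets of $Q$ arise by iterating such preimage maps.

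The heart of the argument, and the main obstacle, is exactly this reachability count; everything else is bookkeeping with the definitions. Restricting to the relevant case $n\ge 2$, minimality of $\cD$ forces $F$ to be a nonempty proper subset of $Q$, so we may fix $f\in F$ and $g\in Q\setminus F$. Given an arbitrary nonempty $S\subseteq Q$, let $t\in\cT_Q$ send each state of $S$ to $f$ and each state of $Q\setminus S$ to $g$; then $\{p\mid t(p)\in F\}=S$. By hypothesis $t=\delta(\cdot,u)$ for some word $u$, and reading $u^R$ from $F$ yields $\{p\mid \delta(p,u)\in F\}=\{p\mid t(p)\in F\}=S$. To obtain $\emptyset$, start from any reachable singleton $\{q\}$, take the constant transformation onto some $q'\neq q$ (possible since $n\ge 2$), realize it as $\delta(\cdot,u')$, and read ${u'}^R$ from $\{q\}$; this lands in $\{p\mid \delta(p,u')=q\}=\emptyset$. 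Thus every one of the $2^n$ subsets of $Q$ is reachable, so $(\cD^\rev)^\deter$ has $2^n$ states, and by Theorem~\ref{thm:Brz} these states are pairwise inequivalent. Therefore the quotient complexity of $L^R$ is exactly $2^n$.
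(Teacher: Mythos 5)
The paper offers no proof of this statement at all: it is imported verbatim from Salomaa, Wood and Yu \cite{SWY04} ("restated in our terminology"), so there is nothing internal to compare against. Judged on its own, your proof is correct and self-contained, and it is essentially the standard argument for this result. The two load-bearing steps both check out: (i) invoking Theorem~\ref{thm:Brz} on $\cD^\rev$ is legitimate, since minimality of $\cD$ gives reachability of every state, hence no empty states in $\cD^\rev$, so reachable subsets of $(\cD^\rev)^\deter$ are automatically pairwise inequivalent and you only need to count them; (ii) the reachability count is right, because $\alpha(F,w)=\{p\in Q\mid \delta(p,w^R)\in F\}$, and for $n\ge 2$ minimality does force $\emp\ne F\ne Q$ (if $F=Q$ every state accepts $\Sig^*$ and if $F=\emp$ every state accepts nothing, so all states would be equivalent), which lets you realize any nonempty $S$ as $t^{-1}(F)$ for the two-valued transformation $t$ you describe, and then $\emp$ via a constant map missing a given singleton. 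Using Theorem~\ref{thm:Brz} to dispose of distinguishability is a small economy over the original argument of \cite{SWY04}, which verifies pairwise distinguishability of the subsets directly. One cosmetic remark: as literally stated the theorem fails for $n=1$ (the hypothesis holds trivially but $L^R$ then has quotient complexity $1$, not $2$); your explicit restriction to $n\ge 2$ is exactly the case the paper uses in Corollary~\ref{thm:reverse}, so this is a defect of the statement rather than of your proof, but it is worth flagging.
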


\begin{corollary}[Reversal]
\label{thm:reverse}
For $n\ge 2$, the quotient complexity of $L_n^R$ is $2^n$.
\end{corollary}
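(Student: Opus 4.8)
The plan is to simply assemble the three ingredients that have already been put in place. First I would invoke the remark immediately following Definition~\ref{def:witness}: the DFA $\cD_n$ is minimal, because for $0\le i\le n-1$ state $i$ accepts $a^{n-1-i}$ and no other state does, so the right languages of the $n$ states are pairwise distinct. Thus $\cD_n$ is a minimal DFA with exactly $n$ states accepting $L_n$.

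Next I would identify the transformations induced by the letters on $Q=\{0,\ldots,n-1\}$. By Definition~\ref{def:witness}, $a$ acts as the cyclic permutation $i\mapsto i+1 \bmod n$, $b$ acts as the transposition $(0,1)$ (it fixes every $i>1$), and $c$ acts as the singular transformation $n-1\choose 0$ (it fixes every $i\le n-2$ and sends $n-1$ to $0$). Hence the set $\{a,b,c\}$ of generators of the transformation semigroup of $\cD_n$ consists of one cyclic permutation of $n$ elements, one transposition, and one singular transformation, so Theorem~\ref{thm:piccard1} applies and the transformation semigroup of $\cD_n$ is all of $\cT_Q$, which has $n^n$ elements.

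Finally I would apply Theorem~\ref{thm:SWY} with $\cD=\cD_n$ and $L=L_n$: since $\cD_n$ is a minimal $n$-state DFA whose transformation semigroup has $n^n$ elements, the quotient complexity of $L_n^R$ equals $2^n$, which is the claim.

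There is no real obstacle here: the corollary is an immediate specialization of Theorem~\ref{thm:SWY} once the two hypotheses (minimality and full transformation semigroup) are checked, and both checks are one-line verifications against Definition~\ref{def:witness}. The only thing to be careful about is correctly matching the letter actions of $\cD_n$ to the three types of generators required by Theorem~\ref{thm:piccard1}, in particular noting that $b$ genuinely fixes all states $i>1$ and that $c$ is singular rather than merely non-bijective in some other way; both are read off directly from the transition function.
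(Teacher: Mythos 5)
Your proposal is correct and matches the paper's argument exactly: the paper derives this corollary from Theorem~\ref{thm:SWY} after noting (via Theorem~\ref{thm:piccard1}) that $a$, $b$, $c$ act as a cyclic permutation, a transposition, and a singular transformation, so the transformation semigroup of the minimal DFA $\cD_n$ has $n^n$ elements. Nothing is missing.
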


\begin{corollary}[Number of Atoms of $L_n$]
\label{cor:atoms}
The language $L_n$ has $2^n$ atoms.
\end{corollary}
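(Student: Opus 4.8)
The claim to prove is Corollary~\ref{cor:atoms}: the language $L_n$ has exactly $2^n$ atoms. The plan is to connect the number of atoms of a language to the quotient complexity of its reverse, and then invoke Corollary~\ref{thm:reverse}. Recall that, by Corollary~\ref{cor:isomorphism}, the \'atomaton $\cA$ of $L_n$ is isomorphic to $\cD_n^{\rev\deter\rev}$, and consequently its reverse $\cA^\rev$ is isomorphic to $\cD_n^{\rev\deter}$. By Proposition~\ref{prop:isomorphism}, the states of $\cA^\rev$ are in bijection with the states of $\cD_n^{\rev\deter}$. Since $\cD_n^{\rev\deter}$ is the minimal DFA for $L_n^R$ (as noted after the list of automata, using Theorem~\ref{thm:Brz}), the number of its states is precisely the quotient complexity of $L_n^R$.

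So the key chain of equalities is: (number of atoms of $L_n$) $=$ (number of states of $\cA$) $=$ (number of states of $\cA^\rev$) $=$ (number of states of $\cD_n^{\rev\deter}$) $=$ (quotient complexity of $L_n^R$) $= 2^n$, where the last equality is exactly Corollary~\ref{thm:reverse}. The first equality is the definition of the \'atomaton together with the observation (recorded in the text preceding Definition~\ref{def:atomaton}) that, under the new definition of atom, \emph{every} subset-intersection that is non-empty counts as an atom, and the \'atomaton has one state per atom. The middle equalities are just the two isomorphisms already established.

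The step that deserves a sentence of care — and the only place where there is anything to check — is that the count of $2^n$ for the quotient complexity of $L_n^R$ really does equal the count of atoms, i.e. that no atom was "lost" or "doubled" in passing through the isomorphisms. This is handled by Proposition~\ref{prop:isomorphism}: the map $\varphi$ sending an atom (written as an intersection of complemented/uncomplemented quotients) to its set of uncomplemented quotients is a bijection onto the state set $S$ of $\cD_n^{\rev\deter}$, so the two sets have the same cardinality. There is no real obstacle here; the corollary is essentially a bookkeeping consequence of Corollary~\ref{cor:isomorphism} and Corollary~\ref{thm:reverse}. One should perhaps just remark that since $L_n^R$ has the maximum possible $2^n$ quotients, $L_n$ attains the maximum possible number $2^n$ of atoms, so in particular the intersection $A_\emp$ of all complemented quotients is non-empty and is a genuine atom in this case.
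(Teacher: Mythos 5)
Your proposal is correct and follows essentially the same route as the paper: both arguments combine the isomorphism of Corollary~\ref{cor:isomorphism} with the fact (Corollary~\ref{thm:reverse}) that $L_n^R$ has quotient complexity $2^n$, and both note that reachability of the empty subset in $\cD_n^{\rev\deter}$ means $A_\emp$ is a genuine (non-empty) atom. No substantive difference.
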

\begin{proof}
By Corollary~\ref{cor:isomorphism}, the \'atomaton of $L_n$ is isomorphic to 
the reversed quotient DFA of $L_n^R$.
By Corollary~\ref{thm:reverse}, the quotient DFA of $L_n^R$ has $2^n$ states, 
and so the empty set of states of $L_n$ is reachable in $L_n^R$. 
Hence $L_n^R$ has the empty quotient, implying that the intersection of
all the complemented quotients  is non-empty,  
and so $L_n$ has $2^n$ atoms.
\qed\end{proof}

\begin{proposition}[Transitions of the \'Atomaton]
\label{prop:atomaton}
Let $\cD_n=(Q,\Sig,\delta, q_0, F)$ be the DFA of Definition~\ref{def:witness}.
The \'atomaton of $L_n=L(\cD_n)$ is the NFA 
$\cA_n=(2^Q,\Sig,\eta,I,\{n-1\})$, where
\be
\item If   $S=\{\emp\}$, then $\eta(S,a)=\{\emp\}$. Otherwise,\\ 
$\eta(\{s_1,\ldots, s_k\},a)=\{s_1+1, \ldots, s_k+1\}$, where the addition is modulo $n$.
\item If   $\{0,1\}\cap S=\emp$, then
	\be
	\item
	$\eta(S,b)=S$,
	\item
	$\eta(\{0\}\cup S,b)= \{1\}\cup S$,
	\item
	$\eta(\{1\}\cup S,b)= \{0\}\cup S$,
	\item
	$\eta(\{0,1\}\cup S,b)= \{0,1\}\cup S$.
	\ee
\item If   $\{0,n-1\}\cap S=\emp$, then
	\be
	\item
	$\eta(S,c)=\{S,\{n-1\}\cup S\}$,
	\item
	$\eta(\{0,n-1\}\cup S,c)= \{\{0,n-1\}\cup S, \{0\}\cup S\}$,
	\item
	$\eta(\{0\}\cup S,c)= \emp$,
	\item
	$\eta(\{n-1\}\cup S,c)= \emp$.
	\ee
\ee
\end{proposition}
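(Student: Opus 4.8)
The plan is to compute the transitions of the \'atomaton $\cA_n$ directly from the $\cD^{\rev\deter\rev}$ characterization established in Corollary~\ref{cor:isomorphism}, since by Corollary~\ref{cor:atoms} every subset of $Q$ is a state of the \'atomaton. Under the isomorphism $\varphi$, the state $\mathbf A_j$ corresponding to an atom whose uncomplemented-quotient set is $P_j\subseteq Q$ is identified with the subset $P_j$ itself; here I identify a quotient symbol $\mathbf K_i$ with the state $i$ of $\cD_n$, so a set $S$ of states of $\cA_n$ is a subset of $Q$. The criterion extracted inside the proof of Proposition~\ref{prop:isomorphism} is that, in the \'atomaton, $\mathbf A_j\in\eta(\mathbf A_i,a)$ iff $P_j\supseteq\delta(P_i,a)$ and $P_j\cap\delta(Q\setminus P_i,a)=\emp$. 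Reversing once more (to pass from $\cA^\rev\cong\cD^{\rev\deter}$ back to $\cA\cong\cD^{\rev\deter\rev}$), the transition $\eta(S,x)$ in $\cA_n$ consists of exactly those subsets $T$ with $T\supseteq\delta(S',x)$-type conditions; equivalently, it is cleanest to read off $\eta(S,x)$ as $\{\,\delta^{-1}(S,x)\cap(\text{reachable part})\,\}$ together with all ways of adjoining states outside the image of $\delta(\cdot,x)$. Concretely, $\eta(S,x)=\{\,T\subseteq Q : \delta(T,x)\cap \delta(Q,x)=S\cap\delta(Q,x)\ \text{and}\ T\supseteq\delta^{-1}(S,x)\,\}$ — but the right way to present it is just to list, for each letter, the preimage structure.

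Next I would handle the three letters separately, using the explicit definition of $\delta$ in Definition~\ref{def:witness}. For $a$: since $\delta(\cdot,a)$ is the full cyclic permutation $i\mapsto i+1\bmod n$, it is a bijection on $Q$, so $\delta(Q,a)=Q$ and there is a unique preimage of every state; hence $\eta(S,a)$ is the single set obtained by decrementing every element of $S$ by $1$ modulo $n$ — but the statement phrases it as incrementing, which corresponds to the orientation of the reversal; I will simply verify the $\eta(\{s_1,\dots,s_k\},a)=\{s_1+1,\dots,s_k+1\}$ formula against the criterion and note the special role of $\{\emp\}$ (the state coding the atom $A_\emp$ with all quotients complemented, i.e. $P=\emp$, which I write as $\{\emp\}$ to match the proposition's notation for the empty subset as a singleton sink). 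For $b$: $\delta(\cdot,b)$ is the transposition $(0,1)$, again a bijection, so again unique preimages, and the four cases in item~2 are exactly the four possibilities for how $S$ meets $\{0,1\}$, with the transposition swapping $0$ and $1$ and fixing everything else. For $c$: $\delta(\cdot,c)$ is the singular map $\binom{n-1}{0}$, so its image is $Q\setminus\{n-1\}$, the state $n-1$ has no $c$-preimage, and state $0$ has two $c$-preimages, $0$ and $n-1$; this is the only letter producing nondeterminism. I then check the four subcases of item~3 against the preimage condition: if $S$ avoids both $0$ and $n-1$, the preimage of $S$ is $S$ itself and we may freely adjoin $n-1$ (the unique state outside the image), giving $\{S,\{n-1\}\cup S\}$; if $S\supseteq\{0,n-1\}$, the constraint $n-1\notin\delta(T,c)$ forces care, and the two admissible $T$ are $\{0,n-1\}\cup S$ and $\{0\}\cup S$; and if $S$ contains exactly one of $0,n-1$, the conditions are contradictory, so $\eta(S,c)=\emp$.

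The main obstacle, and the only step needing real care, is getting the direction of the reversal right in the $c$-transitions and making sure the nondeterministic branching and the empty-result cases are stated correctly. It is easy to conflate $\delta^\rev$ (a transition of the reversed NFA, which may be multivalued because $\delta$ is not injective) with the transitions of the determinized-then-reversed automaton; the clean way to avoid errors is to fix once and for all the equivalence ``$T\in\eta(S,c)$ in $\cA_n$ $\iff$ $S=\alpha(T,c)$ in $\cD^{\rev\deter}$'' from Proposition~\ref{prop:isomorphism}, compute $\alpha(T,c)=\delta^\rev(T,c)$'s determinized value explicitly for the witness, and then invert. Because $\delta(\cdot,c)$ collapses only $\{0,n-1\}\to 0$ and misses only $n-1$, all the bookkeeping reduces to tracking the two states $0$ and $n-1$, which is why every clause of item~3 is conditioned on $\{0,n-1\}\cap S$. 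Items~1 and~2 are then immediate since $a$ and $b$ act as permutations, and the proof is essentially a case check once the correspondence is pinned down.
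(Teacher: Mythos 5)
Your approach is essentially the paper's: the published proof simply writes out $\cD_n^\rev$ explicitly and then says the claims follow by determinization, reversal, and Corollary~\ref{cor:isomorphism}, which is exactly your plan of fixing the criterion $T\in\eta(S,x)$ iff $\delta^{-1}(T,x)=S$ and inverting the preimage map letter by letter; your case analysis for $a$, $b$, and $c$ (tracking only $0$ and $n-1$ for $c$) is the right computation. Two blemishes to remove before writing it up: the inline formula $\eta(S,x)=\{T\subseteq Q : \delta(T,x)\cap\delta(Q,x)=S\cap\delta(Q,x)\ \mbox{and}\ T\supseteq\delta^{-1}(S,x)\}$ is false (for $n=3$, $S=\{1\}$, $x=c$ it rejects $T=\{1,2\}$, which is a legitimate successor), so it should be deleted in favour of the single correct criterion $\delta^{-1}(T,x)=S$; and for the letter $a$ the direction is incrementing, not decrementing, since $\delta^{-1}(T,a)=T-1=S$ forces $T=S+1$, so the hedge about ``orientation of the reversal'' should be replaced by that one-line computation.
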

\begin{proof}
The reverse  of DFA $\cD_n$ is the NFA
$\cD_n^\rev=(Q,\Sig,\delta^\rev,\{n-1\},\{0\})$, where $\delta^\rev$ is defined by
$\delta^\rev(i,a)=i-1 \mbox{ \rm mod } n$, 
$\delta^\rev(i,b)=\delta(i,b)$, 
 $\delta^\rev(0,c)=\{0,n-1\}$, 
$\delta^\rev(n-1,c)=\emp$, and $\delta^\rev(i,c)=i$, for $0<i<n-1$.
After applying 
determinization 
and reversal to $\cD_n^\rev$, the claims follow by Corollary 1.
\qed
\end{proof}


\section{Tightness of the Upper Bounds}
\label{sec:complexity}

We now show that the upper bounds derived in Section~\ref{sec:bounds} 
are tight by proving that the atoms of the languages $L_n$ 
of Definition~\ref{def:witness} meet those bounds.

Since the states of the \'atomaton 
$\cA_n=({\mathbf A},\Sig,\eta, {\mathbf I},\{{\mathbf A}_{m-1}\})$ 
are atom symbols ${\mathbf A}_i$, and the right language of each ${\mathbf A}_i$ 
is the atom $A_i$, the languages $A_i$ are properly represented by the \'atomaton. 
Since, however, the \'atomaton is an NFA, to find the quotient complexity 
of $A_i$, we need  the equivalent minimal DFA.

Let $\cD_n$ be the $n$-state quotient DFA of Definition~\ref{def:witness} 
for $n\ge 2$, and recall that $L(\cD_n)=L_n$.
In the sequel, using Corollary~\ref{cor:isomorphism}, we represent 
the \'atomaton $\cA_n$ of $L_n$ by the isomorphic NFA
$\cD_n^{\rev\deter\rev} = (S,\Sig,\alpha^\rev, G, \{\mathbf F\})$,
and identify the atoms by their sets of uncomplemented quotients.
To simplify the notation, we represent atoms by the subscripts of the quotients, 
that is, by subsets of $Q=\{0,\ldots,n-1\}$, as in Definition~\ref{def:witness}.

In this framework, to find the quotient complexity of an atom $A_P$, 
with $P\subseteq Q$, we start with the NFA
$\cA_P=(S,\Sig,\alpha^\rev, \{P\}, \{\mathbf F\})$, which has the same states, 
transitions, and final state as the \'atomaton, but has only one initial state, 
$P$, corresponding to the atom symbol ${\mathbf A}_P$. 
Because $\cA_P^\rev$ is deterministic and $\cA_P$ has no empty states, 
$\cA_P^\deter$ is minimal by Theorem~\ref{thm:Brz}.
Therefore, $\cA_P^\deter$ is the quotient DFA of the atom $A_P$. 
The states of $\cA_P^\deter$ are certain \emph{sets of sets} of quotient symbols;
to reduce confusion we refer to them as  
\emph{collections of sets}.
The particular collections appearing in $\cA_P^\deter$ will be called ``super-algebras''.

Let $U$ be a subset of $Q$ with $|U|=u$, and let $V$ be a subset of $U$ with 
$|V|=v$. Define $\langle V \rangle_U$ to be the collection of all $2^{u-v}$ 
subsets of $U$ containing $V$. 
There are $C^n_uC^u_v$ collections of the form $\langle V \rangle_U$,
because there are $C^n_u$ ways of choosing $U$, and for each such choice 
there are $C^u_v$ ways of choosing $V$.
The collection $\langle V \rangle_U$ is called the \emph{super-algebra of $U$ 
generated by $V$}.
The \emph{type} of a super-algebra $\langle V \rangle_U$ is the ordered pair
$(|V|,|U|)=(v,u)$.

The following theorem is a well-known result of Piccard~\cite{Pic35} about 
the group---known as the \emph{symmetric group}---of all permutations of a finite set:

\begin{theorem}[Permutations]
\label{thm:piccard2}
The symmetric group of size $n!$ of all permutations of a set $Q=\{0,\ldots,n-1\}$ is generated by any cyclic permutation of 
$Q$ together with any transposition. 
\end{theorem}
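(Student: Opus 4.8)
The plan is to reduce an arbitrary pair $(\sigma,\tau)$, where $\sigma$ is a cyclic permutation of $Q$ and $\tau$ is a transposition, to a convenient normal form, and then to generate all of $S_n$ from the conjugates of $\tau$ under the powers of $\sigma$. First I would relabel the set $Q$ so that $\sigma$ becomes the standard cycle $\sigma(i)=i+1 \bmod n$; relabeling is an inner automorphism of $S_n$, so it carries $\langle\sigma,\tau\rangle$ to a conjugate subgroup and preserves the property of being all of $S_n$. Under this relabeling $\tau$ becomes $(p,q)$ for some $p\neq q$, and conjugating once more by $\sigma^{-p}$ — which fixes $\sigma$ and sends $(p,q)$ to $(0,q-p)$ — I may assume $\tau=(0,d)$, where $1\le d\le n-1$ is the cyclic displacement between the two transposed points along $\sigma$. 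This normalization is what lets the argument treat \emph{every} transposition uniformly, via its displacement $d$, rather than only an adjacent one.

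Next I would manufacture a whole family of transpositions inside the group. Since $\sigma^{i}(0,d)\sigma^{-i}=(i,\,i+d)$ with indices read modulo $n$, the subgroup $\langle\sigma,\tau\rangle$ contains every transposition $\tau_i=(i,i+d)$ for $i=0,\dots,n-1$. The question thus reduces to deciding when $\{\tau_0,\dots,\tau_{n-1}\}$ generates $S_n$. For this I would invoke the standard criterion that a family of transpositions generates $S_n$ exactly when the graph on vertex set $Q$ whose edges are the transposed pairs is connected; the criterion follows by a routine induction, realizing the swap of any two vertices by conjugating an edge-transposition along a path and noting that the edges of a spanning tree already suffice.

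The decisive step — and the main obstacle — is the connectivity of the circulant graph $\Gamma_d$ with edges $\{i,i+d\}$, which is where the hypotheses must be examined honestly. Since $\Gamma_d$ is a disjoint union of $\gcd(d,n)$ cycles, it is connected if and only if $\gcd(d,n)=1$; consequently $\{\tau_i\}$, and hence $\langle\sigma,\tau\rangle$, equals $S_n$ precisely when the displacement $d$ is coprime to $n$. When $\gcd(d,n)=g>1$ the generated group is the imprimitive group preserving the partition of $Q$ into the $g$ residue classes modulo $g$, a proper subgroup (for $n=4,\ d=2$ it is the dihedral group of order $8$, not $S_4$), so the coprimality condition cannot be dropped in full generality. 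This recovers the classical form of the result: any cyclic permutation together with any transposition whose two points lie at coprime displacement along the cycle generates $S_n$, which in particular covers the application to the witness $\cD_n$, where the relevant transposition is $b=(0,1)$ with $d=1$, making $\Gamma_1$ a single $n$-cycle and the full generation of $S_n$ immediate.
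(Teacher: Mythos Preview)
The paper does not prove this theorem; it is stated without proof as a classical result attributed to Piccard, so there is no argument in the paper to compare yours against. Your analysis is therefore the substantive content here, and it is correct---including, crucially, your observation that the theorem \emph{as stated} is false. The pair $\sigma=(0\;1\;2\;3)$, $\tau=(0\;2)$ generates only the dihedral group of order $8$, not $S_4$, exactly as you say; more generally an $n$-cycle together with a transposition at cyclic displacement $d$ generates $S_n$ if and only if $\gcd(d,n)=1$. The paper has overstated the classical result: what is true is that an $n$-cycle and an \emph{adjacent} transposition (or, in the sharp form you establish, any transposition at displacement coprime to $n$) generate $S_n$.

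Your reduction by relabeling and conjugation to $\tau=(0,d)$, the transposition-graph connectivity criterion, and the circulant-graph analysis are all standard and sound. The only caveat for a write-up is that you have proved the \emph{correct} theorem rather than the one printed. Fortunately this does no damage to the paper: the sole use of Theorem~\ref{thm:piccard2} is in Lemma~\ref{lem:connectivity}, where the permutations in play are $a=(0\;1\;\cdots\;n{-}1)$ and $b=(0\;1)$ with displacement $d=1$, which is coprime to every $n$, so the application goes through unchanged.
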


\begin{lemma}[Strong-Connectedness of Super-Algebras]
\label{lem:connectivity}
Super-algebras of the same type are strongly connected by words in $\{a,b\}^*$.
\end{lemma}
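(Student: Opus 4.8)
The plan is to show that the action of the letters $a$ and $b$ on collections of sets permutes the super-algebras of a fixed type $(v,u)$ transitively, so that any two such super-algebras are mutually reachable by a word in $\{a,b\}^*$. First I would record how $a$ and $b$ act on a super-algebra. By Proposition~\ref{prop:atomaton}(1), the letter $a$ acts on each set by adding $1$ modulo $n$; hence on a collection it acts componentwise, and in particular $\alpha^\rev(\langle V \rangle_U, a)$ is again a super-algebra, namely $\langle V' \rangle_{U'}$ where $U'$ and $V'$ are the images of $U$ and $V$ under the cyclic shift. Thus $a$ realizes the cyclic permutation $i \mapsto i+1 \bmod n$ on the ``labels'' $Q$, carrying super-algebras of type $(v,u)$ to super-algebras of type $(v,u)$. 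The letter $b$ realizes the transposition $(0,1)$ of $Q$: the four cases of Proposition~\ref{prop:atomaton}(2) say precisely that, for a set $S$ avoiding $\{0,1\}$, the sets $S,\{0\}\cup S,\{1\}\cup S,\{0,1\}\cup S$ are permuted exactly as the transposition $(0,1)$ permutes the traces of these sets on $\{0,1\}$. Applied to all sets of a super-algebra $\langle V\rangle_U$ at once, $b$ therefore sends it to $\langle (0,1)V\rangle_{(0,1)U}$, again a super-algebra of the same type $(v,u)$.

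Having established that $a$ and $b$ induce on the set of super-algebras of type $(v,u)$ exactly the actions of a cyclic permutation and a transposition of $Q$, the second step is to invoke Theorem~\ref{thm:piccard2}: the cyclic permutation $i\mapsto i+1\bmod n$ and the transposition $(0,1)$ generate the full symmetric group $S_n$ of permutations of $Q$. Since the symmetric group acts transitively on the set of all pairs $(V,U)$ with $V\subseteq U\subseteq Q$, $|V|=v$, $|U|=u$ — indeed given two such pairs one can first map $U$ onto the second $U$ and then, among permutations fixing that $U$ setwise, map $V$ onto the second $V$ — for any two super-algebras $\langle V_1\rangle_{U_1}$ and $\langle V_2\rangle_{U_2}$ of type $(v,u)$ there is a word $w\in\{a,b\}^*$ whose induced permutation $\pi$ of $Q$ satisfies $\pi(U_1)=U_2$ and $\pi(V_1)=V_2$. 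Then $\alpha^\rev(\langle V_1\rangle_{U_1}, w)=\langle V_2\rangle_{U_2}$, and applying the word realizing $\pi^{-1}$ gives the reverse direction, so the two super-algebras are strongly connected by $\{a,b\}^*$.

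The main thing to get right — and the part I expect to be the real obstacle rather than a formality — is the claim that $a$ and $b$ send super-algebras to super-algebras in the first place, i.e.\ that a collection of sets which is ``closed upward within $U$ above $V$'' stays of that shape under the componentwise action. For $a$ this is immediate because a bijection of $Q$ carries the containment lattice to itself. For $b$ the componentwise description from Proposition~\ref{prop:atomaton}(2) must be checked to agree, set by set, with relabeling by the transposition $(0,1)$: one verifies that if $\langle V\rangle_U$ is closed under ``add or remove elements of $U\setminus V$'', then so is its image, and that the image is generated by $(0,1)V$ inside $(0,1)U$. I would carry this out by splitting on whether $0$ and $1$ lie in $U$, in $V$, or in neither, and in each case reading off the behaviour from the four sub-cases of Proposition~\ref{prop:atomaton}(2); in every case the trace on $\{0,1\}$ is transposed while the rest is untouched, which is exactly what relabeling by $(0,1)$ does. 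Once this bookkeeping is done, the transitivity argument via Theorem~\ref{thm:piccard2} finishes the proof cleanly.
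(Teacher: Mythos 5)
Your proof is correct and follows essentially the same route as the paper: both identify the actions of $a$ and $b$ on sets as a cyclic permutation and a transposition, invoke Theorem~\ref{thm:piccard2} to realize an arbitrary permutation of $Q$ by a word in $\{a,b\}^*$, and apply a permutation carrying $(V_1,U_1)$ to $(V_2,U_2)$ elementwise to the super-algebra. The paper merely constructs that permutation explicitly (matching the $i$th elements of $V_1$, $U_1\setminus V_1$, $Q\setminus U_1$ to those of the corresponding blocks) where you argue transitivity of the symmetric-group action abstractly; the content is the same.
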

\begin{proof}
Let $\langle V_1 \rangle_{U_1}$ and $\langle V_2 \rangle_{U_2}$ be any two 
super-algebras of the same type. 
Arrange the elements of $V_1$ in increasing order, and do the same for 
the elements of the sets $V_2$, $U_1\setminus V_1$, $U_2\setminus V_2$, 
$Q\setminus U_1$, and $Q\setminus U_2$. 
Let $\pi:Q\to Q$ be the mapping that assigns the $i$th element of 
$V_2$ to the $i$th element of $V_1$, the $i$th element of
$U_2\setminus V_2$ to the $i$th element of $U_1\setminus V_1$, 
and the $i$th element of $Q\setminus U_2$ to the $i$th element of 
$Q\setminus U_1$. 
For any $R_1$ such that $V_1\subseteq R_1\subseteq U_1$, there is 
a corresponding subset $R_2=\pi(R_1)$, where $V_2\subseteq R_2\subseteq U_2$.
Thus $\pi$ establishes a one-to-one correspondence between the elements
of the super-algebras $\langle V_1 \rangle_{U_1}$ and 
$\langle V_2 \rangle_{U_2}$.
Also, $\pi$ is a permutation of $Q$, and so can be performed by a word 
$w\in \{a,b\}^*$ in $\cD_n$, in view of Theorem~\ref{thm:piccard2}.
Thus every set $R_2$ defined as above is reachable from $R_1$ by $w$. 
So $\langle V_2 \rangle_{U_2}$ is reachable from $\langle V_1 \rangle_{U_1}$.
\qed
\end{proof}

\begin{lemma}[Reachability]
\label{lem:reachability}
Let $\langle V \rangle_U$ be  any super-algebra  of type $(v,u)$.
If $v\ge 2$, then from $\langle V \rangle_U$ we can reach a super-algebra
of type $(v-1,u)$. 
If $u\le n-2$, then from $\langle V \rangle_U$ we can reach 
a super-algebra of type $(v,u+1)$.
\end{lemma}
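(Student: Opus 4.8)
The plan is to exploit the fact that, among the letters $a,b,c$, only $c$ can change the type of a super-algebra. By Proposition~\ref{prop:atomaton} the letter $a$ acts on each set of a collection as the rotation $i\mapsto i+1 \bmod n$, and $b$ acts on each set as the transposition $(0,1)$; hence a word $w\in\{a,b\}^*$ performs some permutation $\pi$ of $Q$ and sends a super-algebra $\langle V\rangle_U$ to $\langle\pi(V)\rangle_{\pi(U)}$, which has the same type. By Lemma~\ref{lem:connectivity} I may therefore reposition a super-algebra of given type $(v,u)$ arbitrarily before applying $c$. So it suffices to exhibit, for one conveniently positioned $\langle V\rangle_U$ of type $(v,u)$, a single letter $c$ leading to \emph{some} super-algebra of the target type.

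First I would record the action of $c$ on an individual set $R\subseteq Q$ from part~3 of Proposition~\ref{prop:atomaton}: if $R$ contains neither $0$ nor $n-1$, then $\eta(R,c)=\{R,\{n-1\}\cup R\}$; if $R$ contains both $0$ and $n-1$, then $\eta(R,c)=\{R,R\setminus\{n-1\}\}$; and if $R$ contains exactly one of $0,n-1$, then $\eta(R,c)=\emp$. In the determinized \'atomaton $\cA_P^\deter$ the $c$-successor of a collection $\cC$ is $\bigcup_{R\in\cC}\eta(R,c)$, so to avoid ``losing'' part of a super-algebra I must position $\langle V\rangle_U$ so that either every member set contains both $0$ and $n-1$, or every member set contains neither.

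For the first assertion, when $v\ge 2$ I would use Lemma~\ref{lem:connectivity} to move to a super-algebra $\langle V\rangle_U$ of type $(v,u)$ with $\{0,n-1\}\subseteq V$. Then every $R$ with $V\subseteq R\subseteq U$ contains both $0$ and $n-1$, so the $c$-successor is $\bigcup_{V\subseteq R\subseteq U}\{R,\,R\setminus\{n-1\}\}$, and the crux is the identity that this union equals $\langle V\setminus\{n-1\}\rangle_U$ (type $(v-1,u)$): the terms $R$ supply those subsets of $U$ containing $n-1$ and $V\setminus\{n-1\}$, the terms $R\setminus\{n-1\}$ supply those containing $V\setminus\{n-1\}$ but not $n-1$, and jointly they are all sets $R'$ with $V\setminus\{n-1\}\subseteq R'\subseteq U$. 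Symmetrically, for the second assertion, when $u\le n-2$ I would move to a super-algebra $\langle V\rangle_U$ with $U\cap\{0,n-1\}=\emp$, so every member set contains neither $0$ nor $n-1$, and the $c$-successor $\bigcup_{V\subseteq R\subseteq U}\{R,\,\{n-1\}\cup R\}$ works out to $\langle V\rangle_{U\cup\{n-1\}}$ (type $(v,u+1)$) by the dual bookkeeping.

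The step I expect to be the real work — though it is elementary — is verifying those two union identities, namely that each union is a full interval in the subset lattice (a genuine super-algebra) rather than a proper sub-collection of one, and checking that the positioning requirements $\{0,n-1\}\subseteq V$ and $U\cap\{0,n-1\}=\emp$ are consistent with the hypotheses $v\ge 2$ and $u\le n-2$ respectively. Everything else is immediate from Proposition~\ref{prop:atomaton} and Lemma~\ref{lem:connectivity}.
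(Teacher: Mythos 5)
Your proposal is correct and follows essentially the same route as the paper: use Lemma~\ref{lem:connectivity} to reposition the super-algebra so that every member set contains both of $0,n-1$ (resp.\ neither), then apply $c$ and check that the resulting union of $\eta(R,c)$ is again a full interval in the subset lattice, of type $(v-1,u)$ (resp.\ $(v,u+1)$). Your version is in fact slightly more careful on the second claim: the positioning condition really must be $U'\cap\{0,n-1\}=\emp$ as you state (available since $u\le n-2$), whereas the paper writes only $\{0,n-1\}\cap V'=\emp$, which taken literally would let sets containing exactly one of $0,n-1$ be killed by $c$ and spoil the type count.
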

\begin{proof}
If $v\ge 2$, then by Lemma~\ref{lem:connectivity},  
from $\langle V \rangle_U$ we can reach a super-algebra 
$\langle V' \rangle_{U'}$ of type $(v,u)$ such that $\{0,n-1\}\subseteq V'$. 
By input $c$ we reach $\langle V'\setminus\{n-1\} \rangle_{U'}$ of 
type $(v-1,u)$.
For the second claim, if $u\le n-2$, then by Lemma~\ref{lem:connectivity}, 
from $\langle V \rangle_U$ we can reach a super-algebra 
$\langle V' \rangle_{U'}$ of type $(v,u)$ such that 
$\{0,n-1\}\cap V'=\emp$.
By input $c$ we reach $\langle V' \rangle_{U'\cup \{n-1\} }$ 
of type $(v,u+1)$.
\qed
\end{proof} 

The next proposition holds for $n\ge 1$ if we let $L_1=\Sig^*$.

\begin{proposition}[Atoms with 0 or $n$ Complemented Quotients]
\label{prop:no_comp2}\\
For $n\ge 1$, the quotient complexity of the atoms $A_Q$ and $A_\emp$ of $L_n$ is $2^n-1$.
\end{proposition}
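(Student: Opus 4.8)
The plan is to prove both the upper bound $2^n-1$ (already established for general $L$ in Proposition~\ref{prop:no_comp}) is met by $A_Q$ and $A_\emp$ for the witness $L_n$, by exhibiting $2^n-1$ distinct quotients of each. I would work inside the $\cD_n^{\rev\deter\rev}$ representation of the \'atomaton, as set up just before the statement: to compute the quotient complexity of the atom $A_P$ we take the NFA $\cA_P$ with single initial state $P\subseteq Q$ and determinize; by Theorem~\ref{thm:Brz} the result $\cA_P^\deter$ is already minimal, so I only need to count its reachable states. For $A_Q$ the initial state is the singleton collection $\{Q\}$, i.e.\ the super-algebra $\langle Q\rangle_Q$ of type $(n,n)$; for $A_\emp$ the initial state is $\{\emp\}$, the super-algebra $\langle\emp\rangle_\emp$ of type $(0,0)$.

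First I would treat $A_Q$. Starting from $\langle Q\rangle_Q$ of type $(n,n)$, repeated applications of the first part of Lemma~\ref{lem:reachability} (which lowers $v$ by one while keeping $u=n$) reach super-algebras of types $(n-1,n),(n-2,n),\ldots$ down to $(1,n)$; note we cannot lower $u$ below $n$ here since the atom has no complemented quotients, but more to the point we want to count collections, not reach everything. Actually the cleaner route: I would show that the reachable states of $\cA_{Q}^\deter$ are exactly the nonempty subsets of $Q$ viewed as singleton collections together with... — no; more carefully, in the $c$-free part the transitions of $\cA_n$ restricted to $\{a,b\}$ act on a single set $S$ by permutations (Proposition~\ref{prop:atomaton}), while $c$ from a set $S$ with $\{0,n-1\}\cap S=\emp$ produces the \emph{pair} $\{S,\{n-1\}\cup S\}$, and from $\{0,n-1\}\cup S$ produces $\{\{0,n-1\}\cup S,\{0\}\cup S\}$, and is empty (dead) otherwise. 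So determinizing, the states are collections of subsets of $Q$, and I claim the reachable ones starting from $\{Q\}$ are precisely all the super-algebras $\langle V\rangle_Q$ with $1\le |V|\le n$ — there are $\sum_{v=1}^{n}C^n_v = 2^n-1$ of them. Reachability of each such super-algebra follows from Lemmas~\ref{lem:connectivity} and~\ref{lem:reachability}; that no other collection is reachable and that distinct super-algebras of this form give distinct (inequivalent) states needs a short argument — e.g.\ an invariant showing that the determinized reachable collections from $\{Q\}$ are always super-algebras with $U=Q$, preserved by $a,b$ (permutations of $Q$ fixing $U=Q$) and by $c$ (which, since $\langle V\rangle_Q$ always contains sets meeting $\{0,n-1\}$ in every possible way, must map to a dead/empty collection unless it stabilizes, or drops $v$), and that the map (super-algebra $\mapsto$ its generator $V$) is injective. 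Then the minimal DFA $\cA_Q^\deter$ has exactly $2^n-1$ states, matching the bound.

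For $A_\emp$ I would argue dually. The atom $A_\emp$ equals $\ol{A_Q}$ up to the fact that complementation of a DFA does not change its state count, and moreover $A_\emp$ is the single atom of $\ol{L_n}$ that $L_n$'s theory calls "$A_\emp$"; since $\ol{L_n}$ has the same quotient DFA structure (same transitions, complemented final set) and hence the same witness transformation semigroup of size $n^n$, the whole analysis of the preceding paragraph applies verbatim with $L_n$ replaced by $\ol{L_n}$, showing the quotient complexity of $A_\emp$ is also $2^n-1$. Alternatively, and perhaps more in keeping with the chosen framework, I would start the determinization of $\cA_n$ from the state $\{\emp\}$ and show the reachable collections are exactly the super-algebras $\langle\emp\rangle_U$ with $0\le |U|\le n-1$ — there are $\sum_{u=0}^{n-1}C^n_u = 2^n-1$ of them — using the second part of Lemma~\ref{lem:reachability} to grow $u$ from $0$ up to $n-1$ and Lemma~\ref{lem:connectivity} to move among super-algebras of each type; note $\{\emp\}$ maps under $a$ to $\{\emp\}$ and under $c$ to $\{\emp,\{n-1\}\}=\langle\emp\rangle_{\{n-1\}}$, and one never reaches $U=Q$ because that would require some set containing $0$ simultaneously with the "grow" branch of $c$, which is blocked. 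For the $n=1$ degenerate case the single atom $\Sig^*$ of $L_1=\Sig^*$ (or of $\emp$) trivially has one quotient, meeting $2^1-1=1$.

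The main obstacle I anticipate is the bookkeeping that the reachable collections in each determinization are \emph{exactly} the claimed families of super-algebras and nothing more — in particular verifying that the $c$-transitions never create a collection outside the family (they either stay within it, drop to the next type, or die), and that super-algebras with distinct parameters are genuinely inequivalent states (not merged by $\cA_P^\deter$), which is automatic from minimality via Theorem~\ref{thm:Brz} once we know they are all reachable and distinct as sets. Everything else is a direct application of Lemmas~\ref{lem:connectivity} and~\ref{lem:reachability} plus the transition table in Proposition~\ref{prop:atomaton}.
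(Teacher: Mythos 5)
Your plan follows the paper's proof almost exactly: work in the $\cD_n^{\rev\deter\rev}$ representation, start $\cA_P^\deter$ from the singleton collection $\{Q\}$ (resp.\ $\{\emp\}$), use Lemma~\ref{lem:connectivity} to move among super-algebras of a fixed type and Lemma~\ref{lem:reachability} to step from $(v,n)$ to $(v-1,n)$ (resp.\ from $(0,u)$ to $(0,u+1)$), and conclude that all $2^n-1$ super-algebras $\langle V\rangle_Q$ with $V\neq\emp$ (resp.\ $\langle\emp\rangle_U$ with $U\subsetneq Q$) are reached and are pairwise inequivalent by minimality via Theorem~\ref{thm:Brz}. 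That is precisely the paper's argument.

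Two remarks. First, the step you single out as the ``main obstacle'' --- verifying that the $c$-transitions never produce a collection outside the family, via an invariant on the determinized states --- is unnecessary: Proposition~\ref{prop:no_comp} already bounds the number of quotients of $A_Q$ and $A_\emp$ by $2^n-1$, so once $2^n-1$ distinct states of the minimal DFA $\cA_P^\deter$ are reached, no further collection can be reachable. This is exactly how the paper closes the argument, and it spares you any analysis of where $c$ sends a super-algebra. Second, your first route for $A_\emp$ starts from the false identity $A_\emp=\ol{A_Q}$: in fact $\ol{A_Q}=\ol{K_0}\cup\cdots\cup\ol{K_{n-1}}$, not $\ol{K_0}\cap\cdots\cap\ol{K_{n-1}}$, and complementing a language does not in general preserve the complexity of a fixed atom in the way you suggest, so that paragraph as written does not stand. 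Your alternative direct argument (grow $u$ from $0$ to $n-1$ using the second part of Lemma~\ref{lem:reachability}) is the correct one and is what the paper does, so the proposition is salvaged by your own fallback.
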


\begin{proof}
Let $\cA_Q$ ($\cA_\emp$) be the modified \'atomaton with only one initial state, $Q$ ($\emp$).
By the considerations above, $\cA_Q^\deter$ ($\cA_\emp^\deter$) is the quotient DFA of $A_Q$ ($A_\emp$); 
hence it suffices to prove the reachability of $2^n-1$ collections.

For $A_Q$, the initial state of $\cA^\deter_Q$ is the collection $\{Q\}$, which is 
the super-algebra $\langle Q \rangle_Q$ of $Q$ generated by $Q$.
Now suppose that we have reached a super-algebra of type $(v,n)$. 
 By Lemma~\ref{lem:connectivity}, we can reach every other super-algebra of type 
 $(v,n)$.
If $v\ge 2$, then by Lemma~\ref{lem:reachability} we can reach a super-algebra 
of type $(v-1,n)$. 
Thus we can reach all super-algebras $\langle V \rangle_Q$ of $Q$, one for each non-empty subset $V$ of~$Q$. 
Since there are at most $2^n-1$ collections and that many can be reached,
no other collection can be reached.

For $A_\emp$, the initial state of $\cA_\emp^\deter$ is the empty collection, which is the super-algebra $\langle \emp \rangle_\emp$ of $\emp$ generated by $\emp$.
Now suppose we have reached a super-algebra of type $(0,u)$. 
 By Lemma~\ref{lem:connectivity}, we can reach every other super-algebra of type 
 $(0,u)$.
If $u\le n-2$, then by Lemma~\ref{lem:reachability} we can reach a super-algebra 
of type $(0,u+1)$. 
Thus we can reach all super-algebras $\langle \emp\rangle_U$, one for each non-empty subset $U$ of~$Q$. 
Since there are at most $2^n-1$ collections and that many can be reached,
no other collection can be reached.

Hence the proposition holds.
\qed
\end{proof}

\begin{proposition}[Tightness]
\label{thm:cnot0}
For $n\ge 2$ and $1\le r\le n-1$, the quotient complexity of any atom of $L_n$ with $r$ complemented quotients is  $f(n,r)$.
\end{proposition}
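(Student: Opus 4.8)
The plan is to mimic the structure of the proof of Proposition~\ref{prop:no_comp2}, but now tracking super-algebras of \emph{all} types $(v,u)$, not just those with $u=n$ or $v=0$. Fix $n\ge 2$ and $1\le r\le n-1$, and let $A_P$ be an atom with $r$ complemented quotients, so $|P|=n-r$. As established in Section~\ref{sec:complexity}, the quotient DFA of $A_P$ is $\cA_P^\deter$, whose initial state is the singleton collection $\{P\}$. A singleton $\{P\}$ with $|P|=n-r$ is exactly the super-algebra $\langle P\rangle_P$ of type $(n-r,n-r)$. So the first step is to show, by induction using Lemma~\ref{lem:connectivity} and Lemma~\ref{lem:reachability}, that from $\langle P\rangle_P$ one can reach every super-algebra $\langle V\rangle_U$ whose type $(v,u)$ satisfies $1\le v\le |P|=n-r$ and $v\le u\le v+(n-(n-r))=v+r$ — equivalently $n-r\ge v\ge 1$ and $u-v\le r$. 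Reachability goes in two directions from $(n-r,n-r)$: Lemma~\ref{lem:reachability} lets us lower $v$ by one (as long as $v\ge 2$) keeping $u$ fixed, and raise $u$ by one (as long as $u\le n-2$) keeping $v$ fixed; Lemma~\ref{lem:connectivity} lets us move freely among super-algebras of a fixed type. A short argument shows the constraint $u-v\le r$ is preserved: each application of $c$ either decreases $v$ (increasing $u-v$ by $1$) or increases $u$ (also increasing $u-v$ by $1$), and starting from $u-v=0$ we can only reach $u-v\le$ (number of $c$-steps), but one must also check the invariant is not violated going the "wrong" way; the cleaner route is to argue directly that collections with $u-v>r$ simply cannot be the determinized image of a single initial atom with $r$ complements, because the right language of such an atom in $\cA_P$ involves at most $r$ "free" coordinates.

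Second, I would count these super-algebras and check the total equals $f(n,r)$. The number of super-algebras of type $(v,u)$ is $C^n_u\,C^u_v$. Summing over the admissible region — $1\le v\le n-r$ and $v\le u\le v+r$, i.e. writing $h=u$ and $k=v$, the pairs with $1\le k\le n-r$ and $k+1\le h\le k+r$ when $u>v$, plus the "diagonal" $u=v$ contributions — should reassemble into $\sum_{k=1}^{r}\sum_{h=k+1}^{k+n-r}C^n_h C^h_k$ after using the symmetry $f(n,r)=f(n,n-r)$ proved in the Properties proposition, together with the observation that the diagonal terms $u=v$ (i.e. $h=k$) contribute $\sum_{v=1}^{n-r}C^n_v$, one of which is the singleton $\{P\}$ itself, and which fold into the sum via the identity $\sum_{h=k}^{n}C^n_h C^k_h = 2^{n-k}C^n_k$. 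Adding $1$ for the empty collection $\{\emp\}$ (reachable via input $c$ from an appropriate super-algebra, or already present) gives $f(n,r)$ exactly.

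Third, for the \emph{upper} bound within this witness — i.e. that no collection outside this family is reachable — I would invoke the same parity-of-counting trick used in Proposition~\ref{prop:no_comp2}: Proposition~\ref{prop:upperbound} already gives $f(n,r)$ as an upper bound on the quotient complexity of \emph{any} atom with $r$ complemented quotients, so once $f(n,r)$ collections are shown reachable, the count is exact and no further collection can appear. Thus the two halves meet. The main obstacle I anticipate is the bookkeeping in the second step: verifying that the doubly-indexed sum $\sum_{(v,u)\in\text{region}}C^n_u C^u_v$ telescopes to the published $f(n,r)$ requires care with the boundary cases $u=v$ and with which endpoint of the witness atom's coordinate set plays the role of the "movable" index; getting the index substitution $(k,h)\leftrightarrow(v,u)$ and the use of $f(n,r)=f(n,n-r)$ lined up correctly is where an error would most easily creep in. A secondary subtlety is confirming that Lemma~\ref{lem:reachability}'s side conditions ($v\ge 2$ to lower $v$, $u\le n-2$ to raise $u$) never block us from reaching a type that should be reachable — in particular that the types with $v=1$ are reached before we would need to lower $v$ further, and that $u$ never needs to exceed $n-1$ within the admissible region (it does not, since $u\le v+r\le (n-r)+r=n$, and the case $u=n$ only arises when $v=n-r$, handled separately as in Proposition~\ref{prop:no_comp2} if $r=0$, which is excluded here, so in fact $u\le n-1$ throughout).
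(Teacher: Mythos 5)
Your overall architecture matches the paper's: start from the singleton $\{P\}=\langle P\rangle_P$ of type $(n-r,n-r)$, use Lemmas~\ref{lem:connectivity} and~\ref{lem:reachability} to sweep out a family of super-algebras, count them, add the empty state, and invoke Proposition~\ref{prop:upperbound} to conclude exactness. However, the region of types you claim to reach is wrong, and the count built on it does not equal $f(n,r)$. Lemma~\ref{lem:reachability} only ever \emph{decreases} $v$ (down to $1$, with $u$ fixed) and \emph{increases} $u$ (up to $n-1$, with $v$ fixed), both starting from $n-r$; so the types swept out are exactly $\{(v,u): 1\le v\le n-r,\ n-r\le u\le n-1\}$. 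In the correspondence between a super-algebra $\langle V\rangle_U$ and the quotient $\bigcap_{i\in V}K_i\cap\bigcap_{j\in Q\setminus U}\ol{K_j}$ of the atom, the relevant constraints are $|V|\le n-r$ (uncomplemented terms) and $|Q\setminus U|\le r$, i.e.\ $u\ge n-r$ (complemented terms) --- not your condition $u-v\le r$, which bounds $|U\setminus V|$. For $n=5$, $r=1$ your region excludes the reachable types $(1,4)$ and $(2,4)$ while including unreachable ones such as $(1,1)$ and $(4,5)$, and it yields $105+1=106$ collections instead of $f(5,1)=76$. The fallback you propose --- that a collection with $u-v>r$ ``cannot be the determinized image'' of the atom --- is false: type $(1,n-1)$ has $u-v=n-2>r$ whenever $r<n-2$, yet all its super-algebras are reachable and are distinct states of the minimal DFA of $A_P$.

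With the correct region the remainder goes through essentially as you outline, and more simply than you anticipate. The count is $\sum_{u=n-r}^{n-1}\sum_{v=1}^{n-r}C^n_uC^u_v$, and the substitution $k=n-u$, $h=k+v$, together with the identity $C^n_{n-k}C^{n-k}_v=C^n_{k+v}C^{k+v}_k$, turns this directly into $\sum_{k=1}^{r}\sum_{h=k+1}^{k+n-r}C^n_hC^h_k=f(n,r)-1$; neither the symmetry $f(n,r)=f(n,n-r)$ nor any separate handling of ``diagonal'' types $u=v$ is needed, since the only such type in the region is the initial $(n-r,n-r)$, already counted. The empty (dead) state is reached by input $c$ from $\langle V\rangle_V$ with $V=\{0,\ldots,n-r-1\}$, which contains $0$ but not $n-1$. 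Your final step --- once $f(n,r)$ states are reached, Proposition~\ref{prop:upperbound} forbids any more --- is exactly the paper's closing argument.
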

\begin{proof}
Let $A_P$ be an atom of $L_n$ with $n-r$ uncomplemented quotients, where 
$1\le r\le n-1$, that is, let $P$ be the set of subscripts of the uncomplemented quotients. 
Let $\cA_P$ be the modified \'atomaton with the initial state $P$. 
As discussed above, $\cA_P^\deter$ is minimal; hence it suffices to prove 
the reachability of $f(n,r)$ collections.

We start with the super-algebra $\langle P \rangle_P$ with type $(n-r,n-r)$. 
By Lemmas~\ref{lem:connectivity} and \ref{lem:reachability}, we can now reach 
all super-algebras of types
\begin{eqnarray*}    
&&(n-r,n-r), (n-r-1,n-r),\ldots, (1,n-r),\\
&&(n-r,n-r+1), (n-r-1,n-r+1),\ldots, (1,n-r+1),\\
&&\hspace{4cm}\cdots\\
&&(n-r,n-1), (n-r-1,n-1),\ldots, (1,n-1).
\end{eqnarray*}

\noindent
Since the number of super-algebras of type $(v,u)$ is
$C^n_uC^u_v$,  
we can reach
$$g(n,r)=\sum_{u=n-r}^{n-1} \sum_{v=1}^{n-r} C_{u}^{n} \cdot C_{v}^{u}$$
algebras.
Changing the first summation index to $k=n-u$, we get
$$g(n,r)=\sum_{k=1}^{r} \sum_{v=1}^{n-r} C_{n-k}^{n} \cdot C_{v}^{n-k}.$$
Note that $C^n_{n-k} C^{n-k}_v=C^n_{k+v}C^{k+v}_{k}$, because 
$C^n_{n-k} C^{n-k}_v=\frac{n!}{(n-k)! k!}\cdot\frac{(n-k)!}{v! (n-k-v)!}=
\frac{n!}{k! v! (n-k-v)!}$,
and $C^n_{k+v}C^{k+v}_{k}=\frac{n!}{(k+v)! (n-k-v)!}\cdot 
\frac{(k+v)!}{k! v!}= \frac{n!}{(n-k-v)! k! v!}$. 
Now, we can write
$g(n,r)=\sum_{k=1}^{r} \sum_{v=1}^{n-r} C_{k+v}^{n} \cdot C_{k}^{k+v},$
and changing the second summation index to $h=k+v$, we have
$$g(n,r)=\sum_{k=1}^{r} \sum_{h=k+1}^{k+n-r} C_{h}^{n} \cdot C_{k}^{h}.$$
We notice that $g(n,r)=f(n,r)-1$.
From the super-algebra $\langle V \rangle_V$, where $V=\{0,1,\ldots,n-r-1\}$, 
we reach the empty quotient by input $c$, since $V$ contains 0, but not $n-1$.

Since we can reach $f(n,r)$ super-algebras, no other collection 
can be reached, 
and the proposition holds.
\qed
\end{proof}

The entire process of finding the complexity of atoms is illustrated in the example below for $n=3$.
\begin{example}
Let $L_3$ be the language accepted by the quotient DFA 
$\cD_3$ of Definition~\ref{def:witness} and Table~\ref{tab:d3}, 
where the initial state is identified by an incoming arrow and  
the final state, by an outgoing arrow. The first column consists of 
states $q$, and the remaining columns give the values of $\delta(q,x)$ 
for each $x\in \Sig$.
Let the   quotients of $L_3$ be
$K_0=L_3=\eps^{-1}L_3$, $K_1=a^{-1}L_3$, and $K_2=(aa)^{-1}L_3$.
The states of $\cD_3$ are subscripts of quotient symbols.

Reversing $\cD_3$, we obtain the NFA $\cD_3^{\rev}$ of 
Table~\ref{tab:d3rev}.
The states of $\cD_3^\rev$ are the same as those of $\cD_3$, 
but the transitions are to sets of states, and 
$02$ stands for $\{0,2\}$, $0$ stands for $\{0\}$, etc.


\begin{table}[hbt]

\begin{minipage}[b]{0.45\linewidth}

\caption{Quotient DFA $\cD_3$ of $L_3$.}
\label{tab:d3}
\begin{center}
$
\begin{array}{|c| c||c| c| c|c|}    
\hline
& \ \  \ \ 
&\ \ a \ \ &\ \ b \ \ & \ \ c \ \ & \\
\hline  
\rightarrow & 0
& 1 & 1  & \ 0 \  & \\
\hline  
& 1
& 2 & 0  & \ 1 \ & \\
\hline  
& 2
&  0 &  2 & \ 0  \ &\rightarrow \\
\hline  
\end{array}
$
\end{center}

\end{minipage}
\hspace{0.2cm}
\begin{minipage}[b]{0.45\linewidth}

\caption{NFA $\cD_3^\rev$ for $L_3^R$.}
\label{tab:d3rev}
\begin{center}
$
\begin{array}{|c| c||c| c| c|c|}    
\hline
& \ \  \ \ 
&\ \ a \ \ &\ \ b \ \ &  \ c \ & \\
\hline  
 & 0
& 2 & 1  & \ 02 \  & \rightarrow \\
\hline  
& 1
& 0 & 0  & \ 1 \ & \\
\hline  
\rightarrow & 2
&  1 &  2 & \ \emp  \ & \\
\hline  
\end{array}
$
\end{center}

\end{minipage}
\end{table}

Next, we perform the subset construction on $\cD_3^\rev$ to determinize it and get the DFA $\cD_3^{\rev\deter}$, the quotient DFA for  $L_3^R$.
Since $\cD_3^\rev$ is trim, and $(\cD_3^\rev)^\rev=\cD_3$ is deterministic, the DFA $\cD_3^{\rev\deter}$ shown in Table~\ref{tab:d3revdet} is minimal by Theorem~\ref{thm:Brz}.

The states of $\cD_3^{\rev\deter}$ are sets of (subscripts of) quotient symbols.
 Now we reverse $\cD_3^{\rev\deter}$  to get $\cD_3^{\rev\deter\rev}$ of Table~\ref{tab:atomaton}, which is
 isomorphic to 
 the \'atomaton 
$\cA_3$.
The states of $\cD_3^{\rev\deter\rev}$  are still sets of (subscripts of) quotient symbols.
Note that the empty set $\emp$ of quotient symbols is a state of $\cD_3^{\rev\deter}$, and hence also of $\cA_3$.
It is not to be confused with the empty set of transitions associated with states 0, 2, 01, and 12 under input $c$ indicated by $-$.
 \begin{table}[hbt]

\begin{minipage}[b]{0.45\linewidth}

\caption{ DFA $\cD_3^{\rev\deter}$ for $L_3^R$.}
\label{tab:d3revdet}
\begin{center}
$
\begin{array}{|c| c||c| c| c|c|}    
\hline
& \ \  \ \ 
&\ \ a \ \ &\ \ b \ \ &  \ c \  & \\
\hline  
& \emp
& \emp & \emp  & \ \emp \ &  \\
\hline  
& 0
&  2 &  1 & \ 02  \ &\rightarrow \\
\hline  
& 1
& 0 & 0  & \ 1 \ &  \\
\hline  
\rightarrow & 2
& 1 & 2  & \ \emp \  & \\
\hline  
& 01
&  02 &  01 & \ 012  \ &\rightarrow \\
\hline  
& 02
&  12 &  12 & \ 02  \ &\rightarrow \\
\hline  
& 12
&  01 &  02 & \ 1  \ & \\
\hline  
&\ 012\
&\  012\ & \ 012\ & \ 012  \ &\rightarrow \\
\hline  
\end{array}
$
\end{center}

\end{minipage}
\hspace{0.3cm}
\begin{minipage}[b]{0.45\linewidth}

\caption{\'Atomaton $\cA_3=\cD_3^{\rev\deter\rev}$.}
\label{tab:atomaton}
\begin{center}
$
\begin{array}{|c| c||c| c| c|c|}    
\hline
& \ \  \ \ 
&\ \ a \ \ &\ \ b \ \ &  \ c \  & \\
\hline  
 & \emp
&  \emp &  \emp & \ \emp,2 \ &  \\
\hline  
\rightarrow & 0
&  1 &  1 & \ -  \ &  \\
\hline  
& 1
&  2 &  0 & \ 1,12  \ & \\
\hline  
& 2
&  0 &  2 & \ -  \ &\rightarrow \\
\hline  
\rightarrow & 01
& 12 & 01  & \ - \ &  \\
\hline  
\rightarrow  & 02
& 01 & 12  & \ 0,02 \ &  \\
\hline  
& 12
&  02 &  02 & \ -  \ & \\
\hline  
\rightarrow & \ 012 \
&\ 012\ &\ 012\  & \ 01,012 \  & \\
\hline  
\end{array}
$
\end{center}


\end{minipage}

\end{table}

Atom $A_{012}=K_0\cap K_1\cap K_2$ is the language accepted by $\cA_3$ started in state $012$.
The states of $\cD_{012}$, the minimal DFA of $A_{012}$, are collections of sets of quotients.
As seen from Table~\ref{tab:a012}, the quotient complexity of $A_{012}$ is seven.
\begin{table}[ht]
\caption{DFA $\cD_{012}$ of $A_{012}$.}
\label{tab:a012}
\begin{center}
$
\begin{array}{|c| c||c| c| c|c|}    
\hline
& \ \  \ \ 
&\ \ a \ \ &\ \ b \ \ &  \ c \  & \\
\hline  
\rightarrow & 012
& 012 & 012  & \ 01,012 \  & \\
\hline  
&\ 01,012\
& \ 12,012 \ & \ 01,012\  & \ 01,012 \ &  \\
\hline  
& 02,012
&  01,012 &  12,012 & \ 0,01,02,012  \ &  \\
\hline  
 & 12, 012
& 02,012 & 02,012  & \ 01,012 \ &  \\
\hline  
& \ 0, 01,02,012 \
&  \ 1,01,12,012\  &  \ 1,01,12,012\ & \ 0,01,02, 012 \ & \\
\hline  
& \ 1,01,12,012\
&  \ 2,02,12,012\ &  0,01,02,012 & \ 1,01,12,012  \ & \\
\hline  
& \ 2,02,12,012\
&  0,01,02,012  &  2,02,12,012 & \ 0,01,02,012  \ &\rightarrow \\
\hline  
\end{array}
$
\end{center}
\end{table}

Atom $A_{01}=K_0\cap K_1\cap \ol{K_2}$ is  accepted by $\cA_3$ started in  state $01$.
The minimal DFA $\cD_{01}$ of $A_{01}$ is shown in Table~\ref{tab:a01}, and the quotient complexity of $A_{01}$ is ten.
Since $01$, $12$ and $02$ are strongly connected by $a$,
the same collections are reached  from these states,  and so the quotient complexity of $A_{12}$ and $A_{02}$ is also ten.

Atom $A_2=\ol{K_0}\cap \ol{K_1}\cap K_2$ is  accepted by $\cA_3$ started in state $2$.
The minimal DFA $\cD_{2}$ of $A_2$ is shown in Table~\ref{tab:a2}, and the quotient complexity of $A_2$ is ten.
Since $0$, $1$ and $2$ are strongly connected by $a$,
the same collections are reached  from these states,  and so the quotient complexity of $A_{0}$ and $A_{1}$ is also ten.

\begin{table}[h]
\begin{minipage}[b]{0.45\linewidth}
\caption{DFA $\cD_{01}$ of $A_{01}$.}
\label{tab:a01}
\begin{center}
$
\begin{array}{|c| c||c| c| c|c|}    
\hline
& \ \  \ \ 
&\ \ a \ \ &\ \ b \ \ &  \ c \  & \\
\hline  
& \emp
& \emp & \emp  & \ \emp \  & \\
\hline  
\rightarrow & 01
& 12 & 01 & \ \emp \  & \\
\hline
 & 02
& 01 & 12 & \ 0,02 \  & \\
\hline
 & 12
& 02 & 02 & \ \emp \  & \\
\hline  
& 0,01
& 1,12 & 1,01  & \ \emp \  & \\
\hline  
& 0,02
& 1,01 & 1,12  & \ 0,02 \  & \\
\hline  
& 1,01
& 2,12 & 0,01  & \ 1,12 \  & \\
\hline  
& 1,12
& 2,02 & 0,02  & \ 1,12 \  & \\
\hline  
&\ 2,02 \
&\ 0,01 \ &\ 2,12 \  & \ 0,02 \  & \rightarrow \\
\hline  
& 2,12 
& 0,02  & 2,02  & \ \emp \  & \rightarrow\\
\hline  
\end{array}
$
\end{center}


\end{minipage}
\begin{minipage}[b]{0.45\linewidth}

\caption{DFA $\cD_{2}$ of $A_2$.}
\label{tab:a2}
\begin{center}
$
\begin{array}{|c| c||c| c| c|c|}    
\hline
& \ \  \ \ 
&\ \ a \ \ &\ \ b \ \ &  \ c \  & \\
\hline  
& \emp
& \emp & \emp  & \ \emp \  & \\
\hline  
& 0
& 1 & 1  & \ \emp \  & \\
\hline  
 & 1
& 2 & 0  & \ 1,12 \  & \\
\hline  
\rightarrow & 2
& 0 & 2  & \ \emp  \  & \rightarrow\\
\hline  
 &\ 0,01 \
& 1,12  & 1,01  & \ \emp \  & \\
\hline  
 & 0,02 
& 1,01 & 1,12  & \ 0,02  \  & \\
\hline  
 & 1,01 
&\ 2,12  \ &\ 0,01 \  & \ 1,12 \  & \\
\hline  
 & 1,12
& 2,02 & 0,02  & \ 1,12 \  & \\
\hline  
 & 2,02 
& 0,01 & 2,12  & \ 0,02 \  &\rightarrow \\
\hline  
 & 2,12 
& 0,02 & 2,02   & \ \emp \  &\rightarrow \\
\hline  
\end{array}
$
\end{center}
\end{minipage}
\end{table}

Finally, atom $A_\emp=\ol{K_0}\cap\ol{K_1}\cap\ol{K_2}$ is accepted by $\cA_3$ started in state $\emp$.
The minimal DFA $\cD_\emp$ is shown in Table~\ref{tab:aemp}, and the quotient complexity of $A_\emp$ is seven.
Note that $\cD_{012}$ and $\cD_\emp$ have isomorphic transition tables, if we ignore final states.
The isomorphism is $\psi:2^{2^Q}\to 2^{2^Q}$ defined as follows:
If $C\subseteq 2^Q$ is a collection of subsets of $Q$, then $\psi(C)=\{Q\setminus S\mid S\in C\}$.

\begin{table}[ht]
\caption{DFA $\cD_\emp$ of $A_\emp$.}
\label{tab:aemp}
\begin{center}
$
\begin{array}{|c| c||c| c| c|c|}    
\hline
& \ \  \ \ 
&\ \ a \ \ &\ \ b \ \ &  \ c \  & \\
\hline  
\rightarrow & \emp
& \emp & \emp  & \ \emp,2 \  & \\
\hline  
& \emp,0
&  \emp,1 &  \emp,1 & \ \emp,2  \ &  \\
\hline  
 & \emp,1
& \emp,2 & \emp,0  & \ \emp,1,2,12 \ &  \\
\hline
&\ \emp,2\
& \ \emp,0 \ & \ \emp,2\  & \ \emp,2 \ & \rightarrow \\
\hline  
& \ \emp,0,1,01\
&  \emp,1,2,12  &  \emp,0,1,01 & \ \emp,1,2,12  \ & \\
\hline  
& \ \emp,0,2,02\
&  \ \emp,0,1,01\ &  \emp,1,2,12 & \ \emp,0,2,02  \ & \rightarrow\\
\hline  
& \ \emp, 1,2,12 \
&  \ \emp, 0,2,02 \  &  \ \emp, 0,2,02 \ & \ \emp,1,2,12 \ & \rightarrow\\
\hline  
\end{array}
$
\end{center}
\end{table}
\end{example}

\section{Conclusions}
\label{sec:conclusions}
The atoms of a regular language $L$ are its basic building blocks. 
We have studied the quotient complexity of the atoms of $L$ as a function 
of the quotient complexity of $L$.
We have computed an upper bound for the quotient complexity of any atom 
with $r$ complemented quotients, and exhibited a class $\{L_n\}$ of 
languages whose atoms meet this bound.
\bigskip

\noin
{\bf Acknowledgments}
We are grateful to Baiyu Li for writing a program for evaluating 
the quotient complexity of atoms.
We thank Eric Rowland and Jeff Shallit 
for computing the ratio defined for Table~\ref{tab:atomcomp} 
for some large values of $n$.

\end{document}